\newtheorem{thm}{Theorem}[section]
\newtheorem{lem}[thm]{Lemma}
\newtheorem{ex}[thm]{Example}
\theoremstyle{definition}
\theoremstyle{remark}
\newtheorem{remark}[thm]{Remark}
\numberwithin{thm}{section}
\DeclareMathOperator{\IM}{Im}
\newcommand{\R}{{\mathord{\mathbb R}}}
\newcommand{\F}{\mathscr{F}}
\newcommand{\C}{{\mathord{\mathbb C}}}
\newcommand{\Z}{{\mathord{\mathbb Z}}}
\newcommand{\e}{\mathfrak{e}}
\def\idty{{\mathchoice {\mathrm{1\mskip-4mu l}} {\mathrm{1\mskip-4mu l}} %
{\mathrm{1\mskip-4.5mu l}} {\mathrm{1\mskip-5mu l}}}}
\DeclareMathOperator{\Tr}{Tr}
\DeclareMathOperator{\Span}{span}
\begin{document}

\title[]{Slow propagation velocities in Schr\"odinger operators with large periodic potential}

\author[H. Abdul-Rahman]{Houssam Abdul-Rahman}
\address{[H. Abdul-Rahman] Department of Mathematical Sciences, College of Science, United Arab Emirates University, 15551
Al Ain, UAE}
\email{\href{mailto:houssam.a@uaeu.ac.ae}{houssam.a@uaeu.ac.ae}}

\author[M.\ Darras]{Mohammed Darras}
\address{[M. Darras]}
\email{\href{mailto: }{tarifi79@gmail.com}}
\address{}
	
\author[C.\ Fischbacher]{Christoph Fischbacher}
\address{[C.\ Fischbacher] Department of Mathematics, Baylor University, Sid Richardson Bldg., 1410 S.\,4th Street, Waco, TX 76706, USA}
\email{\href{mailto:c_fischbacher@baylor.edu}{ c\_fischbacher@baylor.edu}}

\author[G. Stolz]{G\"unter Stolz}
\address{[G. Stolz] Department of Mathematics, 
University of Alabama at Birmingham, 
Birmingham, AL 35294 USA}
\email{\href{mailto: stolz@uab.edu}{stolz@uab.edu}}

\date{\today}

\begin{abstract}

Schr\"odinger operators with periodic potential have generally been shown to exhibit ballistic transport. In this work, we investigate if the propagation velocity, while positive, can be made arbitrarily small by a suitable choice of the periodic potential. We consider the discrete one-dimensional Schr\"odinger operator $\Delta+\mu V$, where $\Delta$ is the discrete Laplacian, $V$ is a $p$-periodic non-degenerate potential, and $\mu>0$. We establish a  Lieb-Robinson-type bound with a group velocity that scales like $\mathcal{O}(1/\mu)$ as $\mu\rightarrow\infty$. This shows the existence of a linear light cone with a maximum velocity of quantum propagation that is decaying at a rate proportional to $1/\mu$. Furthermore, we prove that the asymptotic velocity, or the average velocity of the time-evolved state, exhibits a decay proportional to $\mathcal{O}(1/\mu^{p-1})$ as $\mu\rightarrow\infty$.

\end{abstract}

\maketitle

%
%

\allowdisplaybreaks

\section{Introduction}
The quantum dynamics of an initially localized state  will generally result in one of the following behaviors, depending on the underlying self-adjoint Hamiltonian: (a) \emph{Localization}, in which the state stays in the initial domain up to an exponentially decaying error for all times, and (b) \emph{delocalization} where the state is expected to spread out as time goes on. Determining whether a quantum system exhibits localization or delocalization, and further exploring bounds on the velocity of this spreading in specific cases, has been a major point of interest from both mathematical and physical perspectives, see, e.g., \cite{ARS23, AG, A-Sabri19,Asch-Knauf-1998, BDMY23, Monval-Sabri-22,  DFO16, DLY,  Darras,  F20,  RS} for single-body systems and 
\cite{
AR23, 
ARFS20, 
ARNSS17, 
ARSS18,
ARSS20, 
LC-Sigal23,
LC-L23,
EKS18, 
EKS18-2, 
FF24, 
FO21, 
FS23, 
GMN,
Imbri16, 
MS18, 
MPS20,
NRSS09,
NS10, 
NS06,
NSS12, 
Seiringer-Warzel16} 
for the many-body setting, just to cite a few.

As a further characterization of delocalized quantum systems, if the average position of the time-evolved state grows linearly in time, then the system is said to exhibit  \emph{ballistic motion}, see \cite{Damanic-Ballistic-24} for a discussion on the notion of ballistic transport. It is generally expected and in some cases rigorously known that periodic quantum systems exhibit non-trivial ballistic quantum transport \cite{Asch-Knauf-1998, Monval-Sabri-22, DLY, F20}. In particular, \cite{Asch-Knauf-1998} proved ballistic transport for continuum  periodic Schr\"odinger operators in any dimension.  \cite{DLY}  established the result for one-dimensional periodic Jacobi operators. Then the ballistic transport result is generalized to Jacobi operators on $\Z^d$ in \cite{F20}, and for more general graphs in \cite{Monval-Sabri-22}.

 In this work  we ask if it is possible to choose a periodic potential $V$ such that the Schr\"odinger operator $\Delta+V$ has arbitrarily small velocity (here $\Delta$ is the discrete Laplacian on $\ell^2(\Z)$). Towards this, we consider the one-dimensional Schr\"odinger operator $\Delta+\mu V$ on $\Z$, $V:\mathbb{\Z}\rightarrow\mathbb{R}$ is a $p$-periodic non-degenerate potential, and $\mu>1$ is considered to be large. We will show that this is related to the velocity of the Schr\"odinger operator $\frac{1}{\mu}\Delta+V$ that corresponds to small hopping terms. A similar scenario is considered in the recent work  \cite{ARS23} on quantum walks in periodic fields. It is shown that the propagation velocity, while positive, can be made arbitrarily small through a careful choice of the periodic field.

We have two main results with the underlying Hamiltonian $\Delta+\mu V$ that is known to exhibit ballistic transport \cite{Asch-Knauf-1998, Monval-Sabri-22, DLY, F20}. Our first result is Thm \ref{thm:main}, it establishes a single-body analog of Lieb-Robinson bounds \cite{LR72} with a group velocity that decays  like $\mathcal{O}(1/\mu)$ as $\mu\rightarrow\infty$. This  is an upper bound on the maximal speed of quantum propagation, and more importantly, it shows that the tail of the evolved state outside of the ballistic motion is  exponentially small. Finding Lieb-Robinson bounds has become a highly active field of research in the context of many-body quantum systems, see e.g.,  \cite{GMN, NS10, NS06} for quantum spin systems, \cite{LR-GGC23, LR-NSY18} for lattice fermions, and \cite{ARSS20, LR-FLS22,NRSS09, NSS12,  LR-YL22} for lattice bosons. Our second main result provides an upper bound for the asymptotic velocity, i.e., the average velocity of the time-evolved state. In particular, Thm \ref{thm:main-2} shows that the asymptotic velocity decays at most like $\mathcal{O}(1/\mu^{p-1})$ as $p\rightarrow\infty$.

Thus, in summary, we prove the existence, and give a general description, of a \emph{linear light cone} (also known as the  \emph{Lieb-Robinson light cone}), see e.g., \cite{LC-Sigal23, LC-L23, LC19, LC15, LC20, LC21}. More precisely, the Lieb-Robinson velocity bound $v_{LR}^{(\mu)}=C/\mu$ in Thm \ref{thm:main} shows that the evolved state decays exponentially outside of a space-time region bounded by the line $t=x/v_{LR}^{(\mu)}$ (where $x$ is the position), which therefore plays the role of a ``narrowing'' light cone at a rate of $1/\mu$ as $\mu\rightarrow\infty$. Then inside the light cone, i.e., within the ballistic motion region, Thm \ref{thm:main-2} shows  that the asymptotic velocity decays at most like $\mathcal{O}(1/\mu^{p-1})$ as $\mu\rightarrow\infty$. Whether the  Lieb-Robinson velocity can be sharpened to be closer to $\mathcal{O}(1/\mu^{p-1})$, and hence whether the corresponding linear light cone can be ``narrowed'' closer to $t\sim \mu^{p-1}v$ is an open question.

We proceed as follows:

In the next section, we provide an overview and a detailed discussion of our model, the underlying problem, and the main results. We introduce essential tools, namely the Floquet transform and the resulting Floquet matrix, which play a pivotal role in establishing bounds for the group velocity. The proof of our first key result, a Lieb-Robinson velocity bound in Thm \ref{thm:main}, is presented in Sections \ref{sec:p=2} and \ref{sec:LR-p}. As a warm-up and to introduce the main idea behind our method, we  address the simpler case of a periodic Hamiltonian with a period of $p=2$ in Section \ref{sec:p=2}. Here all calculations can be done quite explicitly. Then we extend this to general period $p>2$, where we assume non-degenerate potential $V$ in Section \ref{sec:LR-p}. The second main result, concerning a bound for the asymptotic velocity, Thm \ref{thm:main-2}, is discussed in Section \ref{sec:a-v}. Additionally, Appendices \ref{sec:LR-ASY} and \ref{sec:Jacobi} include results that apply to more general situations than those considered in this work. We could not find them in the literature, and believe them to be of independent interest. In particular, Thm \ref{thm:LR-v}, proves that the Lieb-Robinson velocity is an upper bound for the asymptotic velocity for general (single-body) Hamiltonians.  Appendix \ref{sec:Jacobi} provides an explicit formula for the determinant of general Jacobi matrices.

\section{Model, main results, and the Floquet matrix}\label{sec:model-results}

In this section, we present the model, the main results, Theorems \ref{thm:main}, and \ref{thm:main-2}, and show how the analysis reduces to the study of Floquet matrices.

\subsection{Model and main results}
We consider the one-dimensional discrete Schr\"odinger operator with period $p\geq 2$ on $\ell^2(\Z)$ given by
\begin{equation} \label{model}
[H_\mu \psi](n) = [\Delta \psi](n) + \mu V_n \psi(n), \quad n\in \Z.
\end{equation}\label{def:gamma}
Here $[\Delta \psi](n) = \psi(n+1)+\psi(n-1)$ is the discrete Laplacian, $\mu>0$ is a parameter which we will consider large. $V:\Z\rightarrow \mathbb{R}$ is a  periodic potential with period $p$, that is $V_j=V_{np+j}$ for all $n\in\Z$, and we use $V_j:=V(j)$ for $j\in\Z$ to ease the notation. We use $\Gamma$ to denote the range of the periodic potential, i.e.,
\begin{equation}\label{V-Gamma}
 \Gamma:=\max_{0<|j-k|<p}|V_j-V_k|<\infty.
\end{equation}
 Moreover, we require that $V$ is not degenerate, that is, any string of $p$ consecutive potentials is consisting of distinct values. So we set the \emph{gap} $\gamma$ as
\begin{equation}\label{V-gamma}
\gamma:=\min_{0<|j-k|<p}|V_j-V_k|>0 \text{\quad (Non-degeneracy). }
\end{equation}
Our results are stated for $H_\mu$ in (\ref{model}), but they yield implications to more general models of the form
\begin{equation}\label{H-alpha-beta}
H_{\alpha,\beta}:=\alpha\Delta+\beta V, \quad \alpha,\beta\in\mathbb{R}\setminus\{0\}.
\end{equation}
In most of the following, we work with $H_\mu=H_{1,\mu}$, $\mu>0$.

The Hamiltonian $H_\mu$ is understood as a $p\times p$-matrix-valued Jacobi matrix on $\mathcal{H}:=\ell^2(\Z;\C^p)\cong \C^p\otimes \ell^2(\Z)$. We denote the canonical basis of $\C^p$ and $\ell^2(\Z)$ by 
$\{\e_m,\ m=1,\ldots,p\} $ and $\{|j\rangle, j\in\Z\}$,
respectively. We order the basis of $\mathcal{H}$ as
\begin{equation}\label{def:delta}
\delta_{pj+(m-1)}:=|\e_m,j\rangle=|\e_m\rangle\otimes|j\rangle
\end{equation}
for all $m\in\{1,\ldots,p\}$ and $j\in\Z$, e.g., $\delta_0=|\e_1,0\rangle, \delta_1=|\e_2,0\rangle,\ldots, \delta_{p-1}=|\e_p,0\rangle, \delta_{p}=|\e_1,1\rangle,\ldots$.

The Hamiltonian $H_\mu$ is understood in block form (i.e., on $\C^p$ and $\ell^2(\Z)$) as
\begin{equation} \label{model2}
H_\mu= A(\mu)\otimes\idty_{\Z}+ | \e_p\rangle\langle \e_1|\otimes T_{-1}+|\e_1\rangle\langle \e_p|\otimes T_1,
\end{equation}
where $T_{\pm 1}$ are the shift operators by 1 on $\ell^2(\Z)$, i.e., $T_{\pm 1}=\sum_{j\in\Z} |j\pm1\rangle\langle j|$, and $A(\mu)$ is the $p\times p$ matrix
\begin{equation}\label{eq:A}
A(\mu) = \begin{bmatrix} \mu V_1 & 1 & & & \\ 1 & \mu V_2 & \ddots & & \\ & \ddots & \ddots & \ddots & \\ & & \ddots & \mu V_{p-1} & 1 \\ & & & 1 & \mu V_p \end{bmatrix}.
\end{equation}

We are interested in studying velocity bounds for the discrete Schr\"odinger operator $H_\mu$ in the setting of large potential $\mu$. In particular, our bounds quantify the velocity bounds in terms of $\mu$. 

Our main tool is Fourier transform in the second component (Floquet transform) of $\mathcal{H}$, i.e., on $\ell^2(\Z)$. In Section \ref{sec:Fourier}, we discuss this transform in detail. It reduces the problem to the study of the corresponding Floquet matrix. 

Our first main result concerns the norm of the $(j,k)$-th block of $e^{-itH_\mu}\in \C^{p\times p}$ (where $j,k\in\Z$), denoted by $
e^{-itH_\mu}(j,k)$, that reveals a Lieb-Robinson velocity\footnote{We remark here that 
$\displaystyle
e^{-itH_\mu}(j,k)= \Tr_{\ell^2(\Z)}\left[e^{-itH_\mu}(\idty_{\C^p}\otimes |k\rangle\langle j|)\right]
$
, i.e., the partial trace operator of $e^{-itH_\mu}(\idty_{\C^p}\otimes |k\rangle\langle j|)$ by tracing out $\ell^2(\Z)$.}.
\begin{thm} \label{thm:main}
Let $V$ be any non-degenerate $p$-periodic potential with gap $\gamma>0$ and range $\Gamma$ as in (\ref{V-gamma}) and (\ref{V-Gamma}), respectively. 
For any $\eta_0>0$, there are $C_1=C_1(\eta_0,\gamma,\Gamma, p)>0$, $C_2 = C_2(\eta_0,\gamma,\Gamma, p)$ and $\mu_0 = \mu_0(\eta_0,\gamma,\Gamma, p)$ such that
\begin{equation} \label{eq:LR}
\left\|e^{-itH_\mu}(j,k)\right\| \le C_1 e^{-\eta_0( |j-k| -v_{LR}^{(\mu)}|t|)} \text{ where }v_{LR}^{(\mu)}:= \frac{C_2}{\mu}
\end{equation}
for all $t\in \R$, $j,k\in \Z$, and all $\mu \ge \mu_0$. (Here and in the following $\|\cdot\|$ denotes the operator norm.)
\end{thm}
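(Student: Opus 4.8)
\medskip
\noindent\textbf{Proof strategy.}\quad The plan is to Floquet-transform $H_\mu$ in the $\ell^2(\Z)$-factor and then deform the quasimomentum off the real axis. Let $\widehat H_\mu(\theta)$ denote the $p\times p$ Floquet matrix of $H_\mu$; from \eqref{model2} it equals $\widehat H_\mu(\theta)=A(\mu)+e^{-i\theta}|\e_p\rangle\langle\e_1|+e^{i\theta}|\e_1\rangle\langle\e_p|$, and
\[
e^{-itH_\mu}(j,k)=\frac1{2\pi}\int_{-\pi}^{\pi}e^{-it\widehat H_\mu(\theta)}\,e^{i(k-j)\theta}\,d\theta .
\]
The three steps are: (i) analytically continue the integrand to the complex strip $\{\,|\IM z|\le\eta_0\,\}$; (ii) shift the contour to $\IM\theta=\pm\eta_0$, with the sign chosen so that $|e^{i(k-j)(\theta+is)}|=e^{-\eta_0|j-k|}$ — by Cauchy's theorem and $2\pi$-periodicity of the integrand no boundary terms arise; (iii) bound $\|e^{-it\widehat H_\mu(z)}\|$ on the shifted line by $C\exp\bigl(|t|\,\sup_{|\IM z'|\le\eta_0}\max_m|\IM\lambda_m(z')|\bigr)$, where $\lambda_m(z)$ are the eigenvalues of $\widehat H_\mu(z)$, and show this rate is $\mathcal O(|t|/\mu)$. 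Combining (ii) and (iii) produces \eqref{eq:LR}.

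\smallskip
For the spectral/analytic input in (iii) I would use large-coupling perturbation theory. Write $\widehat H_\mu(z)=\mu D+B(z)$ with $D=\diag(V_1,\dots,V_p)$ and $B(z)$ the entire, $2\pi$-periodic off-diagonal part, whose norm is bounded on the strip by a constant $\beta_0=\beta_0(\eta_0)$. By non-degeneracy $D$ has the $p$ simple eigenvalues $V_1,\dots,V_p$ with pairwise gaps $\ge\gamma$, so there is $\mu_0=\mu_0(\eta_0,\gamma,\Gamma,p)$ such that for all $\mu\ge\mu_0$ and all $z$ with $|\IM z|\le\eta_0$ the matrix $\widehat H_\mu(z)$ has exactly one eigenvalue $\lambda_m(z)$ in each disk $\{|\zeta-\mu V_m|\le\beta_0\}$; these $p$ eigenvalues are simple (pairwise gaps $\ge\tfrac12\mu\gamma$), analytic in $z$ on the strip, and real for $z\in\R$ since $\widehat H_\mu(\theta)^{*}=\widehat H_\mu(\theta)$. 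A Neumann-series estimate for $(\zeta-\widehat H_\mu(z))^{-1}$ along the circle $\{|\zeta-\mu V_m|=\tfrac12\mu\gamma\}$ then gives, uniformly on the strip, $\|P_m(z)-|\e_m\rangle\langle\e_m|\|\le C_0/\mu$ for the corresponding rank-one spectral projections $P_m(z)$; in particular $\|P_m(z)\|\le 2$. As $\widehat H_\mu(z)$ has simple spectrum it is diagonalizable, $e^{-it\widehat H_\mu(z)}=\sum_{m=1}^p e^{-it\lambda_m(z)}P_m(z)$, and therefore
\[
\bigl\|e^{-it\widehat H_\mu(z)}\bigr\|\le 2p\exp\!\Bigl(|t|\max_{1\le m\le p}\bigl|\IM\lambda_m(z)\bigr|\Bigr),\qquad |\IM z|\le\eta_0 .
\]

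\smallskip
It remains to bound the band imaginary parts. Since $\lambda_m$ is real on $\R$, for $z=\theta+is$ with $|s|\le\eta_0$ we get $|\IM\lambda_m(\theta+is)|=\bigl|\RE\int_0^s\lambda_m'(\theta+ir)\,dr\bigr|\le\eta_0\sup_{|\IM z|\le\eta_0}|\lambda_m'(z)|$, so the velocity is controlled by the (complexified) group velocities. By the trace form of the Hellmann--Feynman formula, $\lambda_m'(z)=\tr\bigl(\widehat H_\mu'(z)P_m(z)\bigr)$, and since $\widehat H_\mu'(z)=i\bigl(e^{iz}|\e_1\rangle\langle\e_p|-e^{-iz}|\e_p\rangle\langle\e_1|\bigr)$ is supported on the $(1,p)$ and $(p,1)$ matrix entries only,
\[
\lambda_m'(z)=i\,e^{iz}\,\langle\e_p|P_m(z)|\e_1\rangle-i\,e^{-iz}\,\langle\e_1|P_m(z)|\e_p\rangle .
\]
The decisive point: since $p\ge2$ no index $m$ equals both $1$ and $p$, so $\langle\e_p|\bigl(|\e_m\rangle\langle\e_m|\bigr)|\e_1\rangle=\delta_{mp}\delta_{m1}=0$ (and likewise with $1$ and $p$ interchanged); together with $\|P_m(z)-|\e_m\rangle\langle\e_m|\|\le C_0/\mu$ this gives $|\langle\e_p|P_m(z)|\e_1\rangle|,\ |\langle\e_1|P_m(z)|\e_p\rangle|\le C_0/\mu$, hence $|\lambda_m'(z)|\le 2e^{\eta_0}C_0/\mu=:C_2/\mu$, uniformly in $m$ and on the strip. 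Feeding this into the propagator bound and performing the contour shift yields $\|e^{-itH_\mu}(j,k)\|\le 2p\,e^{-\eta_0|j-k|}e^{|t|\,\eta_0 C_2/\mu}=2p\,e^{-\eta_0(|j-k|-(C_2/\mu)|t|)}$, which is \eqref{eq:LR} with $C_1=2p$ and $v_{LR}^{(\mu)}=C_2/\mu$.

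\smallskip
The main obstacle is the uniformity required in step (iii): because $\widehat H_\mu(z)$ is \emph{not} self-adjoint for complex $z$, neither min--max nor the spectral theorem applies, and one must instead run quantitative resolvent perturbation theory to obtain the simplicity and $\gtrsim\mu\gamma$-separation of the $\lambda_m(z)$, their analyticity in $z$, and the $\mathcal O(1/\mu)$-closeness of $P_m(z)$ to $|\e_m\rangle\langle\e_m|$, with constants uniform in $\mu\ge\mu_0$, in $\theta$, and across the entire strip $\{|\IM z|\le\eta_0\}$ — this is what makes the matrix exponential bound survive the non-normality. Granting this, the group-velocity estimate $|\lambda_m'(z)|=\mathcal O(1/\mu)$ is almost automatic, coming from the structural fact that $\partial_z\widehat H_\mu$ couples only the two cell-boundary sites $1$ and $p$. (Improving the exponent would require extracting the cancellations in $\lambda_m'(z)$ through order $\mu^{-(p-1)}$ — e.g.\ $\langle\e_p|P_m(z)|\e_1\rangle$ is $\mathcal O(1/\mu)$ individually, but the two terms above cancel to higher order — which is exactly the sharpening left open in the introduction.)
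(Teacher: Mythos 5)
Your proof is correct, and it takes a genuinely different route to the crucial estimate $\bigl|\IM\lambda_m(z)\bigr|=\mathcal O(1/\mu)$ on the strip $|\IM z|\le\eta_0$. The paper instead works with the rescaled matrix $\tilde A_p(\lambda,z)=\tfrac1\mu A_p(\mu,z)$ and applies Rouch\'e's theorem to the characteristic polynomial $F(\zeta)=(\zeta-V_1)\cdots(\zeta-V_p)+g(\zeta)$, where all lower-order terms turn out to be at least quadratic in $\lambda=1/\mu$; this yields $|\zeta_\ell(\lambda,z)-V_\ell|\le\lambda^2\hat C/(\gamma/2)^{p-1}$, and since $V_\ell$ is real one reads off $|\IM\zeta_\ell|=\mathcal O(\lambda^2)$, hence $|\IM\lambda_m|=\mu|\IM\zeta_\ell|=\mathcal O(1/\mu)$, directly. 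You obtain the same $\mathcal O(1/\mu)$ rate by a perturbative argument at the level of projections: a Neumann-series resolvent bound gives $\|P_m(z)-|\e_m\rangle\langle\e_m|\|=\mathcal O(1/\mu)$, the Hellmann--Feynman trace identity $\lambda_m'(z)=\tr(\widehat H_\mu'(z)P_m(z))$ reduces everything to the two corner matrix elements $\langle\e_p|P_m(z)|\e_1\rangle$ and $\langle\e_1|P_m(z)|\e_p\rangle$ (since $\widehat H_\mu'$ is supported only on the $(1,p)$ and $(p,1)$ entries), and reality on $\R$ plus $|\IM z|\le\eta_0$ converts the derivative bound into the band bound. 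Your route is in a sense simpler for this theorem — it bypasses the explicit characteristic-polynomial expansion and the auxiliary Lemmas on $h_p$ and on Jacobi determinants — and it also makes the mechanism behind the small velocity (boundary-site support of $\partial_z\widehat H_\mu$) explicit; the paper's Rouch\'e approach, on the other hand, gives a cleaner handle on explicit constants ($\hat C$, $\mu_0$), which the authors track carefully. Both proofs bound the resolvent/projections and both exploit reality on $\R$; the contour deformation (strip in $\theta$ vs.\ annulus in $z=e^{i\theta}$) is of course the same thing in different coordinates. The only places I would ask you to fill in are routine: the analyticity of the simple eigenvalues $\lambda_m(z)$ on the strip (Kato, via the isolated Riesz contours you already set up), and a precise choice of $\mu_0$ ensuring $\mu\gamma/2>2\|B(z)\|$ uniformly on the strip so the Neumann series converges — both are standard given your setup. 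Your closing remark about cancellations to order $\mu^{-(p-1)}$ is exactly what the paper's Theorem \ref{thm:der-Eig-2} makes explicit via Jacobi's determinant formula.
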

We note here that an explicit numerical value for $C_2$ in (\ref{thm:main}) is given by the following formula, see (\ref{def:C1-C2}),
\begin{equation}
C_2=\frac{\hat C}{\eta_0(\gamma/2)^{p-1}}
\end{equation}
where the value of $\hat C$ can be chosen to be (as a consequence of (\ref{lambdarestrict}) and Lemma \ref{lem:h-p} below),
\begin{equation}\label{def:hat-C}
\hat C:=2^p(\Gamma+\gamma/2)^p+(\Gamma+\gamma/2+2)^{p-2}+2 e^{\eta_0}+\frac{1}{2e^{\eta_0}}.
\end{equation}
Thm \ref{thm:main} represents a Schr\"odinger operator version of the Lieb-Robinson bounds frequently used in the study of quantum many-body systems, see e.g.  \cite{ARSS20, LR-FLS22, GMN, LR-GGC23,  LR72, NS10, NS06, LR-NSY18,  LR-YL22}. This is understood on $\C^p\otimes\ell^2(\Z)$ where the  position of the particle is described by $\ell^2(\Z)$. For any position $j\in\Z$, define the operator $\idty_{|\cdot,j\rangle}$ to be the projection onto the $j$-th position, i.e., onto the space
$
|\cdot,j\rangle:=\Span\{|\e_m,j\rangle, m\in\{1,2,\ldots,p\}\}.
$
For any bounded operators  $A, B$ on $\mathcal{H}$, and positions $j,k\in\Z$ with $j\neq k$, the operators $A_j=\idty_{|\cdot,j\rangle} A \idty_{|\cdot,j\rangle}$ and $B_k=\idty_{|\cdot,k\rangle} B \idty_{|\cdot,k\rangle}$ are ``local'' on sites $j$ and $k$, respectively.  That is $[A_j, B_k]=0$ where $[A_j,B_k]:=A_jB_k-B_kA_j$ is the commutator. Thus, the analogue of Lieb-Robinson bounds in our case corresponds to the commutator $\left[\tau_t^{H_\mu}(A_j), B_k\right]$, where $\tau_t^{H_\mu}(A_j)$ is the Heisenberg evolution 
$
\tau_t^{H_\mu}(A_j):=e^{itH_\mu} A_j e^{-it H_\mu}.
$
Then observe that
\begin{eqnarray}
\left\|\left[\tau_t^{H_\mu}(A_j), B_k\right]\right\|
&\leq& \left\|A_j e^{-it H_\mu}B_k \right\|+ \left\|B_k e^{itH_\mu} A_j\right\| \notag\\
&=& \left\|A_j e^{-it H_\mu}B_k \right\| + \left\|A_j^* e^{-it H_\mu}B_k^* \right\|\notag\\
&\leq& 2\|A_j\| \|B_k\| \left\|e^{-it H_\mu}(j,k)\right\|.
 \end{eqnarray}
This implies that (\ref{eq:LR}) reads as a Lieb-Robinson bound.
\begin{equation}
\left\|\left[\tau_t^{H_\mu}(A_j), B_k\right]\right\| \leq  2\|A_j\| \|B_k\| C_1 e^{-\eta_0( |j-k| -v_{LR}^{(\mu)}|t|)} \text{ where }v_{LR}^{(\mu)}:= \frac{C_2}{\mu}.
\end{equation}
We note here that this argument applies also to general bounded operators with disjoint supports. 

Smallness of the Lieb-Robinson velocity $v_{LR}^{(\mu)}=\mathcal{O}(1/\mu)$ is an effect exclusively due to the potential (i.e., the local part) of the Hamiltonian, in this case due to the choice of large coupling $\mu$. This is of course related to the small hopping setting, $H_{\lambda,1}=\lambda \Delta+V$ where $\lambda$ is small as follows. 
Set $\lambda=1/\mu$ in (\ref{model}) and observe that $H_\mu=H_{1,\mu}=\mu H_{\lambda,1}$, then replace $t$ in (\ref{eq:LR}) by $\lambda t=t/\mu$ to see that (\ref{eq:LR}) reads as
\begin{equation} \label{eq:LR-lambda}
\left\|e^{-it H_{\lambda,1}}(j,k)\right\| \le C_1 e^{-\eta_0( |j-k| - v_{LR}^{(\lambda,1)}|t|)} \text{ where } v_{LR}^{(\lambda,1)}:= C_2 \lambda^2,
\end{equation}
for all $t\in \R$, $j,k\in \Z$, and $\lambda\leq \lambda_0=1/\mu_0$. Here and in the following $v_{LR}^{(\alpha,\beta)}$ denotes the Lieb-Robinson velocity for $H_{\alpha,\beta}$ with the special case $v_{LR}^{(\mu)}:=v_{LR}^{(1,\mu)}$.

Indeed,  if $v$ is a Lieb-Robinson velocity of a general Hamiltonian $H$ then $|\alpha| v$ is the Lieb-Robinson velocity of the Hamiltonian $\alpha H$ where $\alpha\in\mathbb{R}$.  That is, for $H_{\alpha,\mu}=\alpha \Delta+\mu V$, for $\alpha\in\mathbb{R}\setminus\{0\}$ and $\mu>0$ and sufficiently large, we have
\begin{equation}
H_{\alpha,\mu}=\alpha H_{1,\mu/\alpha}\ \Rightarrow\  v_{LR}^{(\alpha,\mu)}=|\alpha|v_{LR}^{(1,\frac\mu\alpha)}=C_2\alpha^2/\mu.
\end{equation}
Here we used the additional fact that $v_{LR}^{(1,\mu)}=v_{LR}^{(1,-\mu)}$ as the negative can be absorbed by the potential.

Thm \ref{thm:main} implies that we have dynamical localization for  times $t$ outside the so-called \emph{Lieb-Robinson light cone} in the space-time region bounded by  $|t|= |j-k|/v_{LB}^{(\mu)}$, see e.g., \cite{LC-L23, LC-Sigal23, LC19, LC15, LC20, LC21}. More precisely, for any $\epsilon\in(0,1)$
\begin{equation}
 \text{ if }\quad v_{LR}^{(\mu)}|t| \le (1-\epsilon)|j-k|
\quad \text{ then }\quad \| e^{-itH_\mu}(j,k) \| \le C_1 e^{-\eta_0 \epsilon|j-k|}.
\end{equation}
This means that $v_{LR}^{(\mu)}= C_2/\mu$ is an 
upper bound for the maximum quantum propagation velocity up to exponentially small tails of the quantum evolution. Said differently, the tail of the time-evolved state outside the region of ballistic motion, with velocity scaling at most like $\mathcal{O}(1/\mu)$, is exponentially small.

Moreover, (\ref{eq:LR}) implies that for any $v>0$,
\begin{eqnarray}
    \sum_{k:\ |j-k|>v|t|}\|e^{-itH_\mu}(j,k)\|^2&\leq&C_1^2 e^{-2\eta_0|t|(v-v_{LR}^{(\mu)})} \sum_{k:\ |j-k|>v|t|}e^{-2\eta_0(|j-k|-v|t|)}\notag\\
    &=& C_1^2\ C(\eta_0)\ e^{-2\eta_0|t|(v-v_{LR}^{(\mu)})}.
\end{eqnarray}
That is, the probability that the distance traveled grows faster than $v_{LR}^{(\mu)}$ decays exponentially, see \cite{AW12}.

Consequently, one should expect to see that $v_{LR}^{(\mu)}=C_2/\mu$ as an upper bound for the asymptotic velocity. More precisely, let $X$ on $\mathcal{H}$ be the position operator
 \begin{equation}\label{def:X}
 X:=\idty_{\C^p}\otimes\sum_{j\in \Z} |j|\ |j\rangle\langle j|, \text{ i.e., }X|\e_m,j\rangle=|j|\ |\e_m,j\rangle \text{ for all }j\in\Z,\ m=1,2,\ldots,p,
 \end{equation}
 then Thm \ref{thm:LR-v}  confirms that (\ref{eq:LR}) implies
 \begin{equation}
v_{Asy}^{(\mu)}:=\lim_{t\rightarrow\infty}\frac1t\left\|X e^{-itH_\mu} \delta_0\right\|\leq v_{LR}^{(\mu)}.
 \end{equation}
 That is, the smallness of the propagation velocity for large potential $\mu$ holds for the expected value of the average velocity. We remark here  that the well known ballistic transport for $H_\mu$, see e.g., \cite{Asch-Knauf-1998, DLY}, asserts  that  $v_{Asy}^{(\mu)}>0$.

 Thm \ref{thm:main} is proven for the special case $p=2$ in Section \ref{sec:p=2}, then proven for general $p>2$ in Section \ref{sec:LR-p}. 
  
 Our second main result is directly related to the asymptotic velocity and it provides bounds when $\mu$ is sufficiently large i.e., $\mu\geq \mu_0$ as in Thm  \ref{thm:main}. In fact, Thm \ref{thm:main-2} shows a faster decay rate for the velocity in $\mu$.
 \begin{thm}\label{thm:main-2}
Given any $\gamma$ as in (\ref{def:gamma}) and period $p\geq 2$. The asymptotic velocity of the $p$-periodic $H_\mu$ given in (\ref{model}) is bounded as
\begin{equation}
v_{Asy}^{(\mu)}=\lim_{t\rightarrow\infty}\frac1t\left\|X e^{-itH_\mu} \delta_0\right\| \leq \frac{C_3}{\mu^{p-1}} \text{ where }\ C_3:=4\pi p^2\left(\frac{2}{\gamma}\right)^{p-1}
\end{equation}
for all $\mu\geq \mu_0$, where $\mu_0$ is the same as in Thm \ref{thm:main}.
\end{thm}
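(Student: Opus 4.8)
The plan is to pass to the Floquet picture, where $H_\mu$ is unitarily equivalent to multiplication by a $p\times p$ matrix-valued function $\widehat H_\mu(\theta)$ on $L^2([0,2\pi);\C^p)$, with $\widehat H_\mu(\theta) = A(\mu) + e^{-i\theta}|\e_p\rangle\langle\e_1| + e^{i\theta}|\e_1\rangle\langle\e_p|$. It is a standard fact (this is exactly the mechanism behind the known ballistic-transport results \cite{Asch-Knauf-1998, DLY, F20}) that the asymptotic velocity is governed by the group velocities, i.e.\ the $\theta$-derivatives of the Floquet eigenvalues (band functions) $E_1(\theta,\mu),\dots,E_p(\theta,\mu)$. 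Concretely, one expects a bound of the shape
\begin{equation}
v_{Asy}^{(\mu)} \;\le\; \text{const}\cdot p \cdot \max_{1\le m\le p}\ \sup_{\theta\in[0,2\pi)} \left| \frac{\partial E_m}{\partial\theta}(\theta,\mu)\right|,
\end{equation}
possibly after first reducing $\|X e^{-itH_\mu}\delta_0\|$ to a sum over blocks via Theorem~\ref{thm:main} to control the tail, and then estimating the ``bulk'' contribution by the band velocities. So the first step is to record this reduction carefully, citing Theorem~\ref{thm:LR-v} or the Floquet description to justify that the limit exists and is bounded by the sup of $|\partial_\theta E_m|$.

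The second and main step is the eigenvalue perturbation estimate: I must show $\sup_\theta |\partial_\theta E_m(\theta,\mu)| = \mathcal{O}(\mu^{-(p-1)})$. The point is that the $\theta$-dependence of $\widehat H_\mu(\theta)$ enters only through the two corner entries $e^{\pm i\theta}$, which are $\mathcal{O}(1)$ and do not grow with $\mu$, while the diagonal $\mu V_j$ entries are large and well-separated (gap $\ge \mu\gamma$ by non-degeneracy). By the Hellmann–Feynman formula, $\partial_\theta E_m = \langle \phi_m, (\partial_\theta \widehat H_\mu)\phi_m\rangle$ where $\phi_m$ is the normalized eigenvector and $\partial_\theta\widehat H_\mu = -ie^{-i\theta}|\e_p\rangle\langle\e_1| + ie^{i\theta}|\e_1\rangle\langle\e_p|$; hence $|\partial_\theta E_m| \le 2\,|\phi_m(1)|\,|\phi_m(p)|$, the product of the first and last components of $\phi_m$. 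So everything reduces to showing that each eigenvector of $A(\mu)$ (perturbed by the two $\mathcal{O}(1)$ corner terms) has the product of its first and last components of size $\mathcal{O}(\mu^{-(p-1)})$.

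The third step is therefore to estimate $|\phi_m(1)\phi_m(p)|$. Here I would use the tridiagonal (Jacobi) structure: for the eigenvalue equation $(A(\mu)+\text{corners})\phi = E\phi$, the recursion for a Jacobi matrix expresses $\phi_m(k)$ in terms of $\phi_m(1)$ (or $\phi_m(p)$) via products of the off-diagonal entries (all equal to $1$) divided by factors $(E - \mu V_j)$. Since $E_m(\theta,\mu) = \mu V_{\sigma(m)} + \mathcal{O}(1)$ for the matching index $\sigma(m)$, while $|E_m - \mu V_j| \ge \mu\gamma/2 - \mathcal{O}(1) \ge \mu\gamma/2 \cdot(1+o(1))$ for $j\ne\sigma(m)$ once $\mu\ge\mu_0$, traversing the chain from site $1$ to site $p$ accumulates $p-1$ such denominators, each of size $\gtrsim \mu\gamma/2$, against $\mathcal O(1)$ numerators — this is exactly where the power $\mu^{-(p-1)}$ and the constant $(2/\gamma)^{p-1}$ come from. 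The determinant formula from Appendix~\ref{sec:Jacobi} can be invoked to make the off-diagonal/diagonal cofactor bookkeeping explicit. Collecting the bounds and tracking constants ($p$ bands, a factor $p$ from the number of sites, the $2\pi$ from integrating over the Brillouin zone, and $(2/\gamma)^{p-1}$ from the denominators) should yield $C_3 = 4\pi p^2 (2/\gamma)^{p-1}$.

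The main obstacle I anticipate is the third step: making the Jacobi-recursion estimate of $|\phi_m(1)\phi_m(p)|$ rigorous and uniform in $\theta$, because the corner terms $e^{\pm i\theta}$ perturb precisely the $(1,p)$ and $(p,1)$ entries — the same entries whose product we are trying to bound — so one cannot naively treat them as an additive perturbation of a fixed tridiagonal matrix; one must either absorb them into the boundary data of the recursion or argue that they only affect the already-subleading terms. Controlling this coupling between the ``ends'' of the chain, while keeping the eigenvalue-separation estimate $|E_m - \mu V_j| \ge \mu\gamma/2$ valid uniformly for all $\theta$ and all $\mu\ge\mu_0$, is the delicate part; once that is in place, the rest is bookkeeping of constants.
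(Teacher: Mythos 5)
Your outline of the reduction step is correct and matches the paper's: via the DLY theorem (Theorem~\ref{thm:DLY}) the asymptotic velocity reduces to $v_{Asy}^{(\mu)}\le 2\pi p^2\,\mu\,\max_{x,\ell}|\partial_x\zeta_\ell(\lambda,e^{ix})|$, so everything hinges on the band-velocity estimate. However, your second and third steps contain a genuine gap: the bound $|\partial_\theta E_m|\le 2|\phi_m(1)||\phi_m(p)|$ from Hellmann--Feynman is correct but far too lossy, because $|\phi_m(1)\phi_m(p)|$ is \emph{not} $\mathcal{O}(\mu^{-(p-1)})$. It is only $\mathcal{O}(\mu^{-1})$ in general. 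For example, take $m=1$, so that $E_1\approx \mu V_1$ and $\phi_1\approx\e_1$. The last row of the eigenvalue equation for $\tilde A_p(\lambda,e^{ix})$ reads $(V_p-\zeta_1)\phi_1(p)=-\lambda e^{-ix}\phi_1(1)-\lambda\phi_1(p-1)$, and since $|\phi_1(1)|\approx 1$, the corner coupling alone forces $\phi_1(p)\approx -\lambda e^{-ix}/(V_p-V_1)=\mathcal O(\lambda)=\mathcal O(1/\mu)$. So the corner terms do not ``only affect the already-subleading terms'' of the recursion; they produce the \emph{leading} term of $\phi_m(p)$ when $m\in\{1,p\}$, and the long-chain recursion you describe only controls the subleading piece. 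Your third step, as stated, therefore yields $v_{Asy}^{(\mu)}=\mathcal O(1/\mu)$, which is the theorem only for $p=2$.

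What actually saves the day is a phase cancellation that your modulus bound throws away. Writing $\partial_x E_m=2\,\RE\!\left[\,ie^{ix}\,\overline{\phi_m(1)}\,\phi_m(p)\,\right]$, the leading contribution $\phi_m(p)\sim -\lambda e^{-ix}\phi_m(1)/(V_p-V_m)$ makes $ie^{ix}\overline{\phi_m(1)}\phi_m(p)$ purely imaginary at leading order (the $e^{-ix}$ in $\phi_m(p)$ exactly cancels the $e^{ix}$ prefactor), so it drops out of the real part. The genuine size of $\partial_x E_m$ is set by the subleading, ``long-way-around'' part of $\phi_m(p)$, which is $\mathcal O(\lambda^{p-1})$; this is precisely what your recursion estimate would give if you first peeled off the corner-induced leading term. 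To turn your outline into a proof you would have to decompose $\phi_m(p)$ into the corner-shortcut piece and the long-chain piece and show that the shortcut piece contributes only $\mathcal{O}(\lambda^p)$ to the real part after the phase cancellation; merely bounding $|\phi_m(1)\phi_m(p)|$ cannot work. The paper sidesteps this delicacy entirely: it differentiates the characteristic polynomial implicitly, using the Jacobi formula $\frac{d}{dx}\det S=\Tr(\mathrm{adj}(S)\,dS/dx)$ and the identity $\Tr\,\mathrm{adj}(\zeta_\ell\idty-\tilde A_p)=\prod_{j\neq\ell}(\zeta_\ell-\zeta_j)$, to get the \emph{exact} formula $\partial_x\zeta_\ell=-2\lambda^p\sin x\,\prod_{j\neq\ell}(\zeta_\ell-\zeta_j)^{-1}$, in which the $\lambda^p$ factor is manifest from the start; combined with the eigenvalue separation $|\zeta_\ell-\zeta_j|\ge\gamma/2$ from Lemma~\ref{lem:lambda-eign}, the bound follows immediately without ever having to track the cancellation by hand.
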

That is, in the large potential setting, i.e., if $\mu$ is sufficiently large, the ballistic motion scales at most like $\mathcal{O}(1/\mu^{p-1})$ for any fixed $p$ and $\gamma$. The proof of Thm \ref{thm:main-2} uses the fact that the asymptotic velocity problem of the $p$-periodic Hamiltonian $H_\mu$ can be approached by studying the derivatives of the eigenvalues of the corresponding $p\times p$ Floquet matrix, see e.g., \cite[Thm 1.6]{DLY}, \cite[Thm 2.3]{Asch-Knauf-1998}, and \cite[Thm 3.1]{Monval-Sabri-22}. Thm \ref{thm:der-Eig-2} provides an explicit formula for these eigenvalues' derivative.

As discussed above for the Lieb-Robinson velocity, we observe that in the case of small hopping terms, $H_{\lambda,1}=\lambda \Delta+V$ where $\lambda$ is sufficiently small, the asymptotic velocity follows trivially from Thm \ref{thm:main-2} as follows.  Set $\lambda=1/\mu$ and note that $H_\mu=\mu H_{\lambda,1}$, to see that
\begin{equation}
\lim_{t\rightarrow\infty}\frac{1}{ t/\mu}\left\|X e^{-i (t/\mu) H_\mu} \delta_0\right\| \leq \frac{C_3}{\mu^{p-1}}=C_3\lambda^{p-1} \Rightarrow \lim_{ t\rightarrow\infty}\frac{1}{t}\left\|X e^{-i t H_{\lambda,1}} \delta_0\right\| \leq C_3\lambda^{p}.
\end{equation}
So the asymptotic velocity scales at most like $\mathcal{O}(\lambda^{p})$ for sufficiently small hopping terms, i.e., when $\lambda<\lambda_0=1/\mu_0$.

Theorems \ref{thm:main} and \ref{thm:main-2} are valid for all $\mu\geq \mu_0$ where $\mu_0(\eta_0,\gamma,\Gamma, p)$ is large enough to ensure that the eigenvalues of the Floquet matrix (see (\ref{def:Floquet-p}) below)  associated with $H_\mu$ are distinct. In particular,  the value of $\mu_0$ is determined by Lemma \ref{lem:lambda-eign} below where $\mu_0$ can be chosen to be $\mu_0=1/\lambda_0$. A definite numeric value for $\mu_0$ (or $\lambda_0$) is given as
\begin{equation}\label{eq:mu-0}
\mu_0^2=\max\left\{1,2\hat{C}(2/\gamma)^p\right\} 
\end{equation}
where $\hat{C}=\hat{C}(\eta_0,\gamma,\Gamma, p)>1$ is given in (\ref{def:hat-C}).
 An interesting scenario to consider is when $\gamma\ll 1$, then $\mu$ has to be taken as $\mu\geq \mu_0\gg (2/\gamma)^{p/2}$. This clarifies that  Theorems \ref{thm:main} and \ref{thm:main-2} do not allow to consider the case when $p$ and $\mu$ are constants and $\gamma\rightarrow 0$. However,  for any fixed  $p$ and  $\gamma$, the group velocity $v_{LR}^{(\mu)}\sim\mathcal{O}(1/\mu)$ and the asymptotic velocity $v_{Asy}^{(\mu)}\sim \mathcal{O}(1/\mu^{p-1})$ as $\mu\rightarrow\infty$. The question remains unanswered regarding the possibility of enhancing the Lieb-Robinson velocity to the scale closer to $\mathcal{O}(1/\mu^{p-1})$. Consequently, it is yet to be determined whether the corresponding linear light cone can be narrowed down closer to $t\sim \mu^{p-1}|j-k|$.

\subsection{Fourier transform and the Floquet matrix}\label{sec:Fourier}

An explicit implementation of the Floquet transform of $H_\mu$ is given by the componentwise Fourier transform $\F: \ell^2(\Z;\C^p) \to \mathcal{L}^2([0,2\pi);\C^p)$,\footnote{In the block Jacobi form of $H_\mu$ in (\ref{model2}), the Floquet transform  is understood as $\F: \C^p\otimes\ell^2(\Z)\rightarrow \mathcal{L}^2([0,2\pi))^{\oplus p}$, given by $\F=\idty_{\C^p}\otimes F$ where $(Fu)(x)=(2\pi)^{-1/2}\sum_{k} e^{ixk} u(k)$}
\begin{equation} \label{eq:F}
(\F u)(x) = (2\pi)^{-1/2} \sum_n \begin{bmatrix} u_1(n) \\ \vdots\\ u_p(n) \end{bmatrix} e^{ixn}, \quad (\F^{-1}\varphi)(n) = (2\pi)^{-1/2} \int_0^{2\pi} \begin{bmatrix} \varphi_1(x) \\\vdots\\ \varphi_p(x) \end{bmatrix} e^{-ixn}\,dx.
\end{equation}
We get 
\begin{equation} \label{FtransH} 
H_\mu = \F^{-1} \hat{H}_\mu \F
\end{equation} 
with $\hat{H}_\mu$ on $\mathcal{L}^2([0,2\pi);\C^p)$ given by the generalized multiplication operator
\begin{equation}
(\hat{H}_\mu \varphi)(x) = A_p(\mu, e^{ix}) \varphi(x), \quad x \in [0,2\pi).
\end{equation}
Here $\varphi = \begin{bmatrix}\varphi_1&\ldots&\varphi_p\end{bmatrix}^T \in \mathcal{L}^2([0,2\pi); \C^p)$, and for $p>2$, the Floquet matrix $A_p(\mu, e^{ix})$ is given as 
\begin{equation}\label{def:Floquet-p}
A_p(\mu, e^{ix}) = \begin{bmatrix} \mu V_1 & 1 & & & e^{ix} \\ 1 & \mu V_2 & \ddots & & \\ & \ddots & \ddots & \ddots & \\ & & \ddots & \mu V_{p-1} & 1 \\ e^{-ix} & & & 1 & \mu V_p \end{bmatrix}_{p\times p}.
\end{equation}
Using \eqref{FtransH} we find that the $(j,k)$-th $p\times p$-block of $e^{-itH_\mu}$ is given by the formula
\begin{equation}\label{eq:block-exp-jk}
e^{-itH_\mu}(j,k) = \frac{1}{2\pi} \int_0^{2\pi} e^{-itA_p(\mu,e^{ix})} e^{-i(j-k)x}\,dx.
\end{equation}
When $p=2$, then $A_{p=2}(\mu, e^{ix})$ is the $2\times 2$ matrix
\begin{equation}\label{def:Floquet-2}
A_2(\mu, e^{ix})=\begin{bmatrix} \mu V_1 & 1+e^{ix}\\ 1+e^{-ix} & \mu V_2
\end{bmatrix}
\end{equation}
for which explicit exact calculations can be made to prove the results. The next section describes this case as a warm-up for the proof of the general period  $p$. Note also that $A_2(\mu,e^{ix})$ has a different structure than $A_p(\mu,e^{ix})$ when $p>2$.

\section{Warm-up: Lieb-Robinson velocity bound for period two} \label{sec:p=2}
To put emphasis on the method, and as a warm up with explicit calculations, we consider in this section the periodic Schr\"odinger operator $H_\mu$ in (\ref{model}) with alternating potential $V_n:=V(n)=(-1)^n$ for all $n\in\Z$ (so that $\gamma=2$), i.e., we consider the Hamiltonian $H^{(2)}_\mu$   on $\ell^2(\Z)$ given as
\begin{equation}\label{model-main-p=2}
[H^{(2)}_{\mu}\psi](n)=[\Delta \psi](n)+\mu (-1)^n \psi(n), \quad n\in\Z
\end{equation}
where $\Delta$ is the discrete Laplacian on $\ell^2(\Z)$.

In this case, the Floquet matrix (\ref{def:Floquet-2}) is given by
\begin{equation}\label{eq:A-2}
A_2(\mu, e^{ix})=\begin{bmatrix} \mu & 1+e^{ix}\\ 1+e^{-ix} & -\mu\end{bmatrix}
=
\mathscr{O} \begin{bmatrix} \omega_+(x) & 0 \\ 0 & \omega_-(x) \end{bmatrix} \mathscr{O}^{-1}.
\end{equation}
Here, for $x \not= \pi$, the eigenvalues $\omega_{\pm}(x)$ of $A_2(\mu, e^{ix})$ are given by the formula
\begin{equation}\label{eq:Floquet-Eig-2by2}
 \omega_\pm(x) = \pm \sqrt{\mu^2+ 2(1+\cos x)},
 \end{equation}
 and 
\begin{equation}\label{eq:A2-O}
 \mathscr{O}= \begin{bmatrix} 1 & 1 \\ q_+ & q_- \end{bmatrix}, \quad q_{\pm} = \frac{\omega_\pm(x)-\mu}{1+e^{ix}},\quad \mathscr{O}^{-1} = - \frac{1+e^{ix}}{2\omega_+(x)} \begin{bmatrix} q_- & -1 \\ -q_+ & 1 \end{bmatrix}.
 \end{equation}
We note that this implies that the spectrum of $H^{(2)}_\mu$ is 
\begin{equation}
\sigma(H^{(2)}_\mu) = \bigcup_{x\in [0,2\pi)} \{\omega_{\pm}(x)\} = \left[-\sqrt{\mu^2+4}, -\mu\right] \cup \left[ \mu, \sqrt{\mu^2+4}\right].
\end{equation}
 In particular, this gives that (the Lebesgue measure of the spectrum) $|\sigma(H^{(2)}_\mu)| \le 4/\mu$ (a special case of a result by Last in \cite{L}).

However, we are interested in understanding the quantum dynamics associated with $H^{(2)}_\mu$. Towards this, with direct calculations using (\ref{eq:A-2}) and (\ref{eq:A2-O}), we find that the entries of the $2\times 2$ matrix $e^{-itA_2(\mu,e^{ix})}$ are given by the formulas
\begin{eqnarray} \label{evolution}
\langle \e_1,e^{-itA_2(\mu,e^{ix})}\e_1\rangle& = & \frac12(f_{t,\mu}^-+f_{t,\mu}^+)(x)+\frac12 (g_{t,\mu}^- -g_{t,\mu}^+)(x)\notag
 \\
\langle \e_2,e^{-itA_2(\mu,e^{ix})}\e_2\rangle &=&
 \overline{\langle \e_1,e^{-itA_2(\mu,e^{ix})}\e_1\rangle}\notag\\
\langle \e_1,e^{-itA_2(\mu,e^{ix})}\e_2\rangle& = &\frac{1+e^{ix}}{2\mu}(g_{t,\mu}^+ - g_{t,\mu}^-)(x)\notag\\
\langle \e_2,e^{-itA_2(\mu,e^{ix})}\e_1\rangle& = & -\overline{\langle \e_1,e^{-itA_2(\mu,e^{ix})}\e_2\rangle}
\end{eqnarray}
where 
\begin{equation}\label{def:f-g}
f_{t,\mu}^{\pm}(x) := e^{\pm it \omega_+(x)}\quad \text{ and }\quad g_{t,\mu}^{\pm}(x) := \frac{\mu}{\omega_+(x)}f_{t,\mu}^{\pm}(x)=\frac{\mu e^{\pm it\omega_+(x)}}{\omega_+(x)}. 
\end{equation}
The trivial case $x=\pi$, where both $A_2(\mu,e^{ix})$ and $e^{-itA_2(\mu,e^{ix})}$ are diagonal, appears as a limiting case of \eqref{evolution}. 
Here recall that the $2\times 2$ matrix $e^{-itH^{(2)}_\mu}(j,k)$ is given by formula (\ref{eq:block-exp-jk}). We will prove the following  result.

\begin{thm} \label{thm:p2}
There exist constants $C_1>0$, $C_2>0$ and $\eta>0$ such that
\begin{equation} \label{LRbound}
\| e^{-itH^{(2)}_\mu}(j,k) \| \le C_1 e^{-\eta(|j-k|- \frac{C_2}{\mu} |t|)} 
\end{equation}
for all $t\in \R$, $j, k \in \Z$ and $\mu\ge 1$. 
\end{thm}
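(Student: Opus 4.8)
The plan is to estimate the entries of the $2\times2$ matrix $e^{-itA_2(\mu,e^{ix})}$ given in \eqref{evolution}, and then bound the block $e^{-itH^{(2)}_\mu}(j,k)$ via the Fourier integral \eqref{eq:block-exp-jk} using a contour-shifting argument. The key observation is that, by \eqref{eq:Floquet-Eig-2by2}, the eigenvalue $\omega_+(x)=\sqrt{\mu^2+2(1+\cos x)}$ extends to an analytic function of $z=e^{ix}$ in an annulus around the unit circle: writing $2+2\cos x = 2 + z + z^{-1}$, the quantity under the square root is $\mu^2+2+z+z^{-1}$, which for $\mu\ge 1$ stays away from $(-\infty,0]$ as long as $|z|$ is within a fixed factor of $1$ (say $e^{-\eta}\le |z|\le e^{\eta}$ for a suitable small $\eta>0$ depending only on absolute constants, chosen so that $|z+z^{-1}|\le 2\cosh\eta < \mu^2+2$ when $\mu \ge 1$). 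Hence $\omega_+$, and therefore $f^{\pm}_{t,\mu}$ and $g^{\pm}_{t,\mu}$, and also the prefactor $\frac{1+e^{ix}}{2\mu}$, are analytic in that annulus.

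The main work is the following quantitative claim: for $z$ in the annulus $e^{-\eta}\le |z|\le e^{\eta}$ one has
\begin{equation}\label{eq:pf-prop-claim}
|\omega_+(z)| \ge \mu, \qquad |\IM \omega_+(z)| \le \frac{C}{\mu},
\end{equation}
for an absolute constant $C$, where $\omega_+$ denotes the analytic continuation. The first bound is immediate on the circle and follows on the annulus from $\RE(\mu^2+2+z+z^{-1}) \ge \mu^2+2-2\cosh\eta \ge \mu^2$ for $\eta$ small. For the second, write $\omega_+(z)^2 = \mu^2 + 2 + z + z^{-1}$; on the circle $\omega_+$ is real, and for $|z|=e^{\pm\eta}$ the imaginary part of $z+z^{-1}$ is $O(\eta)$, so $\IM(\omega_+^2)=O(\eta)$; combining with $|\omega_+|\ge \mu$ gives $|\IM \omega_+| = |\IM(\omega_+^2)|/|2\RE \omega_+ + \text{(small)}| \le C\eta/\mu$ after absorbing $\eta$ into $C$. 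Consequently, on the shifted contour $|z|=e^{\pm\eta}$ we get $|f^{\pm}_{t,\mu}(z)| = |e^{\pm it\omega_+(z)}| = e^{\mp t\,\IM\omega_+(z)} \le e^{C|t|/\mu}$, and similarly $|g^{\pm}_{t,\mu}(z)| \le \frac{\mu}{|\omega_+(z)|}e^{C|t|/\mu} \le e^{C|t|/\mu}$, while the off-diagonal prefactor $\left|\frac{1+z}{2\mu}\right|$ stays bounded by an absolute constant on the annulus.

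Given this, I would conclude as follows. Each matrix entry of $e^{-itA_2(\mu,e^{ix})}$, as a function of $z=e^{ix}$, is analytic in the annulus and bounded there by $C\,e^{C|t|/\mu}$ (uniformly in $\mu\ge 1$). Therefore, for $j-k > 0$, I deform the contour in \eqref{eq:block-exp-jk} from $|z|=1$ outward to $|z|=e^{\eta}$: no poles are crossed, and
\begin{equation}
\left\|e^{-itH^{(2)}_\mu}(j,k)\right\| = \left\|\frac{1}{2\pi i}\oint_{|z|=e^{\eta}} e^{-itA_2(\mu,z)} z^{-(j-k)-1}\,dz\right\| \le C\,e^{C|t|/\mu}\,e^{-\eta(j-k)},
\end{equation}
and for $j-k<0$ I deform inward to $|z|=e^{-\eta}$ to obtain the same bound with $|j-k|$; the case $j=k$ is trivial. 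Renaming constants ($C_1=C$, $C_2=C$, and the $\eta$ already fixed) yields \eqref{LRbound} for all $\mu\ge 1$. The main obstacle is purely the quantitative estimate \eqref{eq:pf-prop-claim} — verifying that the analytic continuation of $\omega_+$ off the unit circle keeps its modulus $\ge\mu$ (so the $g$-terms don't blow up) while its imaginary part is only $O(1/\mu)$ (so the exponential growth in $t$ is $O(|t|/\mu)$); everything else is routine contour-shifting and matrix-norm bookkeeping. One should also double-check the behavior near $x=\pi$ (i.e.\ $z=-1$), where $\omega_\pm$ collide and the eigenvector matrix $\mathscr{O}$ degenerates, but since the entries in \eqref{evolution} are written directly in terms of $\omega_+$ and the entire-function prefactors, they extend analytically across $z=-1$ and no special treatment is needed.
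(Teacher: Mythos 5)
Your proposal is correct and takes essentially the same approach as the paper: extend the Floquet eigenvalue $\omega_+$ analytically to an annulus around $|z|=1$, show $|\IM\omega_+| = O(1/\mu)$ and $|\omega_+|\ge\mu$ there, and deform the contour in the Fourier integral outward or inward by a fixed factor depending on the sign of $j-k$. The only cosmetic difference is that the paper extracts $|\IM h_\mu|\le C/\mu$ from a first-order Taylor expansion of $\sqrt{\mu^2+2+z+1/z}$ and bounds the Fourier coefficients of $f^\pm$ and $g^\pm$ separately, whereas you obtain the same $O(1/\mu)$ bound from the identity $\IM(\omega_+^2)=2\,\RE\omega_+\,\IM\omega_+$ and shift the contour for the full matrix-valued integrand at once.
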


\begin{proof}
Proving Thm \ref{thm:p2} reduces to showing corresponding bounds for the Fourier coefficients of the matrix entries appearing in \eqref{evolution}. In particular, we use (\ref{eq:block-exp-jk}) to bound each of the entries of the matrix $e^{-itH^{(2)}_\mu}(j,k)$
\begin{equation}\label{eq:entries-exp-p2}
\langle \e_\ell, e^{-itH^{(2)}_\mu}(j,k)\e_m\rangle= \frac{1}{2\pi} \int_0^{2\pi} \langle \e_\ell, e^{-itA_2(\mu,e^{ix})} \e_m\rangle e^{-i(j-k)x}\,dx
\end{equation}
for $\ell,m\in\{1,2\}$.
To handle the entries of $e^{-itH^{(2)}_\mu}(j,k)$ we have to prove bounds of the form (\ref{LRbound}) for the entries given in (\ref{evolution}), i.e., we need to find such bounds for the functions
$f_{t,\mu}^{\pm}(x)$ and $g_{t,\mu}^{\pm}(x)$ in (\ref{def:f-g}). More precisely, with the notation
\begin{equation}\label{eq:p2-11-term}
\hat{\phi}(k) := \frac{1}{2\pi} \int_0^{2\pi} \phi(x) e^{-ikx}\,dx,
\end{equation}
we will show that 
\begin{equation} \label{hat-f-g-bound}
|\hat{f}^\pm_{t,\mu}(k)| \le C_2 e^{-\eta(|k|- \frac{C'_2}{\mu} |t|)} \quad\text{ and }\quad |\hat{g}_{t,\mu}^\pm(k)| \le C_3 e^{-\eta(|k|- \frac{C'_2}{\mu} |t|)}.
\end{equation}
Then use these bounds  in (\ref{eq:entries-exp-p2}) using (\ref{evolution}) to find that 
\begin{eqnarray}
|\langle \e_1, e^{-itH^{(2)}_\mu}(j,k) \e_2\rangle| &\leq& 2C_3 e^{-\eta(|k|- \frac{C_2'}{\mu} |t|)} \notag \\  
|\langle \e_\ell, e^{-itH^{(2)}_\mu}(j,k) \e_\ell\rangle| &\leq& (C_2+C_3) e^{-\eta(|k|- \frac{C_2'}{\mu} |t|)}\ \text{ for } \ \ell=1,2.
\end{eqnarray}
In the following, we show the bounds (\ref{hat-f-g-bound}).

Since $\mu \ge 1$ there is a (fixed) $\rho_0 >1$ such that
\begin{equation} \label{apriori}
\mu^2+2 > 2\rho_0 + \frac{1}{2\rho_0}.
\end{equation}
\begin{remark}
We note that  the value of $\rho_0$ in  (\ref{apriori})  can be chosen independently of $\mu\geq 1$. In particular, it can be any value in the open interval $\left(1,\frac34+\frac{\sqrt{5}}{4}\right)$. This can be seen for example, by observing that the function $2\rho_0+(2\rho_0)^{-1}$ is increasing on the interval  $\left(1,\frac34+\frac{\sqrt{5}}{4}\right)$, and it attains the value of $3<\mu^2+2$ at the endpoint $\frac{3}{4}+\frac{\sqrt{5}}{4}$.
\end{remark}
First, observe that 
\begin{equation}\label{eq:hat-f-z1}
\hat{f}^\pm_{t,\mu}(k) = \frac{1}{2\pi} \int_0^{2\pi} f^\pm_{t,\mu}(x) e^{-ikx}\,dx= \frac{1}{2\pi i} \oint_{|z|=1} e^{\pm it h_\mu(z)} z^{-k-1}\,dz
\end{equation}
where we applied the change of variable $e^{ix}=z$ (so $dx=dz/(zi)$) and the function $h_\mu$ is given as
\begin{equation}
h_\mu(z) := \sqrt{\mu^2+2+z+1/z}.
\end{equation}
The contour integral in (\ref{eq:hat-f-z1}) is taken over the positively oriented unit circle $|z|=1$.
By \eqref{apriori}, $h_\mu(z)$ is analytic in the annulus $\frac{1}{2\rho_0} < |z| < 2\rho_0$ since
\begin{equation}\label{z+z-1}
\left|z+1/z\right| \le |z| + 1/|z| < 2\rho_0 + \frac{1}{2\rho_0} < \mu^2 + 2.
\end{equation}
Here we used the fact that function $x+1/x$ is increasing for $x>1$.

We now treat the two cases $k\geq 0$ and $k<0$ separately. 
If $k\geq 0$, we proceed as follows. We replace the integration over the unit circle $|z|=1$  in \eqref{eq:hat-f-z1} by the integration over the larger circle $|z|=\rho_0$, so that
\begin{equation}
\hat{f}^\pm_{t,\mu}(k) = \frac{1}{2\pi i} \oint_{|z|=\rho_0} e^{\pm ith_\mu(z)} z^{-k-1}\,dz.
\end{equation}
This gives the bound
\begin{equation}
|\hat{f}^\pm_{t,\mu}(k)| \le \left( \max_{|z|=\rho_0} e^{ t| \IM(h_\mu(z))|} \right) \rho_0^{-k}.
\end{equation}
For $h_\mu(z)$ we have, by first order Taylor expansion in the annulus $\frac{1}{2\rho_0} < |z| < 2\rho_0$, that \footnote{Here, in the context of complex-valued functions, the big-o notation $\mathcal{O}(\cdot)$ is understood as follows: Let $f$ and $g$ be complex valued functions defined on a set $\mathscr{S}\subset\C$, the formula  $f(z)=\mathcal{O}(g(z))$ means that there is a constant $C>0$ such that
$|f(z)|\leq C|g(z)|$  for all $z\in \mathscr{S}$, see e.g., \cite{Bruijn-Complex}.}
(here we recall (\ref{z+z-1})) 
\begin{eqnarray} \label{hmubound}
h_\mu(z) & = & \sqrt{\mu^2+2 + z +1/z}=\sqrt{\mu^2+2}\left(1+\mathcal{O}\left(\frac{z+1/z}{\mu^2+2}\right)\right) \notag \\
& = & \sqrt{\mu^2+2} +  \mathcal{O}\left(\frac{z+1/z}{\sqrt{\mu^2+2}} \right) = \sqrt{\mu^2+2} + \mathcal{O}\left(\frac{2\rho_0+\frac{1}{2\rho_0}}{\sqrt{\mu^2+2}} \right). 
\end{eqnarray}

Considering that $\IM(\sqrt{\mu^2+2})=0$, we obtain
\begin{equation}
|\IM(h_\mu(z))| \le C_0 \frac{2\rho_0 + \frac{1}{2\rho_0}}{\sqrt{\mu^2+2}} \le \frac{C'}{\mu}.
\end{equation}
Hence,
\begin{equation} \label{hatf1}
|\hat{f}^\pm_{t,\mu}(k)| \le e^{-\eta k + \frac{C'}{\mu} t}, \quad \eta := \ln\rho_0.
\end{equation}

For the case $k\le -1$ we deform the integration contour $|z|=1$ in \eqref{eq:hat-f-z1} to the smaller circle $|z|= 1/\rho_0$. We get
\begin{equation}
\hat{f}^\pm_{t,\mu}(k) = \frac{1}{2\pi i} \oint_{|z|=1/\rho_0} e^{\pm ith_\mu(z)} z^{-k-1}\,dz
\end{equation}
with
\begin{equation} \label{hatf2}
|\hat{f}^\pm_{t,\mu}(k)| \le  \left( \max_{|z|=1/\rho_0} e^{t|\IM(h_\mu(z))|} \right) (1/\rho_0)^{-k}  
 \le  e^{\frac{C'}{\mu} t} \rho_0^{-|k|} = e^{-\eta |k| + \frac{C'}{\mu} t}. 
\end{equation}
Combining the bounds \eqref{hatf1} and \eqref{hatf2}, for the two cases $k\ge 0$ and $k<0$, proves the bound for $|\hat{f}^\pm_{t,\mu}(k)|$ in  \eqref{hat-f-g-bound} with $C_2=1$, $C'_2=C'/\eta$.

By \eqref{hmubound} we have 
\begin{equation}
\frac{\mu}{\sqrt{\mu^2+2+z + 1/z}} = \frac{\mu}{\sqrt{\mu^2+2} + \mathcal{O}\left(\frac{2\rho_0 + \frac{1}{2\rho_0}}{\sqrt{\mu^2+2}}\right)} \Rightarrow  
\left|\frac{\mu}{\sqrt{\mu^2+2+z + 1/z}} \right|\le C_3,
\end{equation}
both on the circles $|z|= \rho_0$ and $|z|=1/\rho_0$. With this uniform bound we can proceed as before and get the bound for $|\hat{g}^\pm_{t,\mu}(x)|=\left|\frac{\mu}{\omega_+(x)}\right| |f_{t,\mu}^{\pm}(x)|$ in \eqref{hat-f-g-bound}. This finishes the proof of Thm \ref{thm:p2}.

\end{proof}

\section{A Lieb-Robinson velocity for general period}\label{sec:LR-p}

Now we consider the Schr\"odinger operator $H_\mu$ with periodic potential $V:\Z \to \R$ and general period $p> 2$ as described in Section \ref{sec:model-results}  from (\ref{model}) to (\ref{eq:A}). This section includes the proof of Thm \ref{thm:main}.

Section \ref{sec:Fourier} presents the Fourier transform that yields the Floquet matrix $A_p(\mu,e^{ix})$ in (\ref{def:Floquet-p}) and formula (\ref{eq:block-exp-jk}) for the $p\times p$ matrix $e^{-itH_\mu}(j,k)$. We recall (\ref{eq:block-exp-jk}) for the reader's convenience
\begin{equation}\label{eq:block-exp-jk-2}
e^{-itH_\mu}(j,k) = \frac{1}{2\pi} \int_0^{2\pi} e^{-itA_p(\mu,e^{ix})} e^{-i(j-k)x}\,dx.
\end{equation}

Instead of the large parameter $\mu$ it will be more convenient to mostly work with the small parameter $\lambda := 1/\mu$ and write $A_p(\mu,z) = \mu \tilde{A}_p(\lambda,z)$ with
\begin{equation}\label{def:floquet-m}
\tilde{A}_p(\lambda, z) = \begin{bmatrix} V_1 & \lambda & & & \lambda z \\ \lambda & V_2 & \ddots & & \\ & \ddots & \ddots & \ddots & \\ & & \ddots & V_{p-1} & \lambda \\ \lambda/z & & & \lambda & V_p \end{bmatrix}.
\end{equation}
Here we do not only consider the case $z=e^{ix}$, but also extensions of $\tilde{A}_p(\lambda,z)$ into rings $1/\rho_0 < |z| < \rho_0$ in $\C$ for $\rho_0>1$. Note that for $|z|\not= 1$ the resulting operators are {\it not} self-adjoint (where $\lambda\neq 0$). It is noteworthy to remark here that $\tilde{A}_p(\lambda, e^{ix})$ in (\ref{def:floquet-m}) is the Floquet matrix of $H_{\lambda,1}=H_{\frac1\mu,1}=\frac{1}{\mu} \Delta+V$.

The eigenvalues $V_1, \ldots, V_p$ of $\tilde{A}_p(0,z)$ are real and pairwise distinct. The following Lemma  describes a parameter region $(\lambda,z)$ in which the eigenvalues of $\tilde{A}_p(\lambda,z)$ remain distinct and thus are analytic both with respect to $\lambda$ and with respect to $z$.

\begin{lem}\label{lem:lambda-eign}
For any fixed $\rho_0>1$, $z\in\C$ in the annulus $(2\rho_0)^{-1}<|z|<2\rho_0$, and  $\gamma>0$ as in (\ref{def:gamma}). There exists $\hat{C}=\hat{C}(\rho_0, \gamma,\Gamma,p)>1$ such that the eigenvalues of $\tilde{A}_p(\lambda,z)$ can be labeled as  $\{\zeta_\ell(\lambda,z),\ \ell=1,2,\ldots,p\}$ and
\begin{equation}\label{lem:eq-lambda-eign}
|\zeta_\ell(\lambda,z)-V_\ell|\leq \frac{\lambda^2\hat{C}}{(\gamma/2)^{p-1}}<\gamma/4
\end{equation} 
for all $\ell\in\{1,\ldots,p\}$, and
 for all $\lambda\in[0, \lambda_0]$, where $\lambda_0^2=\min\{1,\frac{(\gamma/2)^p}{2 \hat{C}}\}$.
 \end{lem}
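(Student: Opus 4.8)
The plan is to study the characteristic polynomial $P_{\lambda,z}(w):=\det\bigl(\tilde{A}_p(\lambda,z)-w\idty\bigr)$ and to locate its roots near the unperturbed eigenvalues $V_1,\dots,V_p$ by Rouch\'e's theorem. The crucial structural input is the vanishing of the first-order term in $\lambda$: expanding $P_{\lambda,z}(w)$ by the Leibniz formula over permutations $\sigma$ of $\{1,\dots,p\}$ (equivalently, via the explicit Jacobi determinant formula of Appendix \ref{sec:Jacobi}), the identity contributes $\prod_{i=1}^p(V_i-w)$, while every $\sigma\neq\mathrm{id}$ has at least two non-fixed points, each forcing an off-diagonal factor of $\tilde{A}_p(\lambda,z)$, i.e.\ a factor $\lambda$ times one of $1,z,1/z$. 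Since no permutation has exactly one non-fixed point, I would write
\[
P_{\lambda,z}(w)=\prod_{i=1}^p(V_i-w)+\lambda^2 R_{\lambda,z}(w),
\]
with $R_{\lambda,z}$ a polynomial in $w$ of degree $\le p-2$. I then bound $|R_{\lambda,z}(w)|$ for $w$ in the disk $|w-V_\ell|\le\gamma/2$: for such $w$ and any $i$, $|V_i-w|\le|V_i-V_\ell|+\gamma/2\le\Gamma+\gamma/2$ by \eqref{V-Gamma}; each off-diagonal factor is $\le\max(1,|z|,1/|z|)\le2\rho_0$ on the annulus; the number of permutations compatible with the periodic Jacobi pattern is bounded in terms of $p$ (at most $2^p$); and the only $z$-asymmetric contribution comes from the two ``full rotation'' permutations, which together produce $\pm\lambda^p(z+1/z)$, bounded using $|z+1/z|\le2\rho_0+\tfrac{1}{2\rho_0}$. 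Collecting these estimates (and using $\lambda\le1$) yields $|R_{\lambda,z}(w)|\le\hat{C}$ with $\hat{C}=\hat{C}(\rho_0,\gamma,\Gamma,p)>1$ of the form \eqref{def:hat-C}.

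Next I apply Rouch\'e's theorem on the circle $|w-V_\ell|=\gamma/2$. There $|V_\ell-w|=\gamma/2$ and $|V_i-w|\ge|V_i-V_\ell|-\gamma/2\ge\gamma/2$ for $i\neq\ell$ by the non-degeneracy \eqref{V-gamma}, so $\bigl|\prod_i(V_i-w)\bigr|\ge(\gamma/2)^p$. For $\lambda\le\lambda_0$ we have $\lambda^2\hat{C}\le(\gamma/2)^p/2<\bigl|\prod_i(V_i-w)\bigr|$ on this circle, hence $P_{\lambda,z}$ has exactly one zero $\zeta_\ell(\lambda,z)$ in the open disk $|w-V_\ell|<\gamma/2$. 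Since $V$ is $\gamma$-separated these $p$ disks are pairwise disjoint, so the $p$ zeros $\zeta_1,\dots,\zeta_p$ of the degree-$p$ polynomial $P_{\lambda,z}$ are distinct and exhaust the spectrum of $\tilde{A}_p(\lambda,z)$; this gives the labeling. As each $\zeta_\ell$ is a simple root of a polynomial that depends polynomially (hence analytically) on $(\lambda,z,w)$, the analytic implicit function theorem yields analyticity of $\zeta_\ell$ in $(\lambda,z)$ throughout the region, as asserted in the preamble.

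Finally, for the quantitative bound I evaluate $P_{\lambda,z}$ at its root: $\prod_{i=1}^p(V_i-\zeta_\ell)=-\lambda^2 R_{\lambda,z}(\zeta_\ell)$. Since $|\zeta_\ell-V_\ell|\le\gamma/2$, for $i\neq\ell$ we get $|V_i-\zeta_\ell|\ge\gamma-\gamma/2=\gamma/2$, and $|R_{\lambda,z}(\zeta_\ell)|\le\hat{C}$ by the first step, so
\[
|\zeta_\ell-V_\ell|=\frac{\lambda^2\,|R_{\lambda,z}(\zeta_\ell)|}{\prod_{i\neq\ell}|V_i-\zeta_\ell|}\le\frac{\lambda^2\hat{C}}{(\gamma/2)^{p-1}}.
\]
Using $\lambda\le\lambda_0$ with $\lambda_0^2\le(\gamma/2)^p/(2\hat{C})$ then bounds the right-hand side by $\tfrac12\cdot\tfrac{\gamma}{2}\le\gamma/4$, which completes the proof.

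The one genuinely nontrivial step is the first: isolating the exact $\lambda^2$ prefactor and controlling the lower-order coefficients of the characteristic polynomial of the periodic (Jacobi-plus-corners) matrix uniformly over the annulus $(2\rho_0)^{-1}<|z|<2\rho_0$. This is where the explicit Jacobi determinant formula and the count of admissible permutations enter, and where the elaborate constant $\hat{C}$ in \eqref{def:hat-C} originates; the remaining steps are standard applications of Rouch\'e's theorem and the triangle inequality.
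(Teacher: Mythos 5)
Your proof is correct and follows essentially the same approach as the paper: expand the characteristic polynomial into the unperturbed product $\prod_i(V_i-w)$ plus a remainder that is $\mathcal{O}(\lambda^2)$, bound the remainder uniformly on the $\gamma/2$-disks, and apply Rouch\'e. The one small difference is procedural: you apply Rouch\'e on the circle $|w-V_\ell|=\gamma/2$ to first \emph{locate} each root in a disk of fixed radius $\gamma/2$, and then separately evaluate the characteristic polynomial at the root to extract the quantitative estimate $|\zeta_\ell-V_\ell|\le\lambda^2\hat C/(\gamma/2)^{p-1}$, whereas the paper applies Rouch\'e directly on the smaller, $\lambda$-dependent circle of radius $\gamma_0:=\lambda^2\hat C/(\gamma/2)^{p-1}$ so that the quantitative bound falls out of the Rouch\'e step itself. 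Both routes are valid and yield the same constant; your two-step version is perhaps slightly cleaner in that it separates the qualitative localization of the roots from the quantitative perturbation estimate.
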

The proof of Lemma \ref{lem:lambda-eign} is included at the end of this section.
\begin{remark}
We note here that when $z=e^{ix}$, an asymptotic formula for each of the eigenvalues $\zeta_\ell(\lambda,e^{ix})\in\R$  is understood from \cite[pages 6-8]{Reed-Simon-IV} (see also \cite[Thm 2.2]{KS-Eig14}).
\begin{equation}
\zeta_\ell(\lambda,e^{ix})=V_\ell-\lambda^2\sum_{j=1,\ j\neq \ell}^p\frac{|\langle\e_j, \tilde{A}_p(\lambda, e^{ix})\e_\ell\rangle|^2}{V_j-V_\ell}+\mathcal{O}(\lambda^3) \text{ as }\lambda\rightarrow 0.
\end{equation}
i.e., $|\zeta_\ell(\lambda,e^{ix})-V_\ell |=\mathcal{O}(\lambda^2)$. This shows that our bound in (\ref{lem:eq-lambda-eign}) is sharp in the order of $\lambda$.

\end{remark} 
 
That is, for sufficiently small $\lambda\leq \lambda_0<1$ (or equivalently, sufficiently large $\mu=1/\lambda>1$), each eigenvalue $\zeta_\ell(\lambda,z)$ of $\tilde{A}_p(\lambda,z)$ has distance of at most $\gamma/4$ to $V_\ell$ for all $z\in\C$ such that $(2\rho_0)^{-1}<|z|<2\rho_0$.  Hence, $\{\zeta_\ell(\lambda,z)\}_\ell$ are distinct and analytic with respect to $\lambda$ and $z$ on the window
\begin{equation}\label{eq:range-lambda-z}
(\lambda,z)\in [0,\lambda_0]\times\left\{z\in\C;\ |z|\in\big[(2\rho_0)^{-1},2\rho_0\big]\right\}.
\end{equation}
This parameter range is considered in the sequel of this section.
 We stress here that $\mu_0$ in Thm \ref{thm:main} is determined by the value of $\lambda_0$ where $\mu_0$ can be chosen to be $\mu_0=1/\lambda_0$.

Lemma \ref{lem:lambda-eign} shows that within the parameter range \eqref{eq:range-lambda-z} we have
\begin{equation} \label{IMbound}
\left| \frac{1}{\lambda}  \IM\left(\zeta_\ell(\lambda,z)\right) \right| \le\frac{\lambda\hat{C}}{(\gamma/2)^{p-1}}, \quad \ell=1,\ldots,p.
\end{equation}
Then we direct our attention to the integral formula (\ref{eq:block-exp-jk-2}) for $e^{-itH_\mu}(j,k)$, and we use the spectral decomposition of $e^{-itA_p(\mu,e^{ix})}$,
\begin{equation} \label{eq2}
e^{-itA_p(\mu,e^{ix})} = \sum_{\ell=1}^p e^{-it\mu \zeta_\ell(\lambda,e^{ix})} P_\ell(\lambda,e^{ix}),
\end{equation}
where $P_\ell(\lambda,e^{ix})$ is the orthogonal rank-one projection onto the eigenspace of $\tilde{A}_p(\lambda,e^{ix})$ associated to the eigenvalue $\zeta_\ell(\lambda, e^{ix})$. By the Riesz formula
\begin{equation} \label{eq3}
P_\ell(\lambda,e^{ix}) = \frac{1}{2\pi i} \oint_{\Gamma_{\gamma/2}(V_\ell)} (\zeta\idty - \tilde{A}_p(\lambda, e^{ix}))^{-1}\, d\zeta,
\end{equation}
with the (positively oriented) circle $\Gamma_{\gamma/2}(V_\ell) = \{ \zeta \in\C;\ |\zeta -V_\ell |= \gamma/2\}$ whose interior contains only the eigenvalue $\zeta_\ell(\lambda,e^{ix})$ for $\ell=1,2,\ldots,p$ and no others. Here and in the following we use  $\idty:=\idty_{\C^p}$ to simplify notations.

Substitute (\ref{eq3}) in (\ref{eq2}) then in (\ref{eq:block-exp-jk-2}) to obtain
\begin{equation}\label{eq:iterative-int-1}
e^{-it H_\mu}(j,k)=\frac{1}{4\pi^2 i} \sum_{\ell=1}^p  \int_{0}^{2\pi}\oint_{\Gamma_{\gamma/2}(V_\ell)}e^{-it\mu\zeta_\ell(\lambda,e^{ix})}(\zeta\idty-\tilde{A}_p(\lambda,e^{ix}))^{-1}e^{-i(j-k)x}\ d\zeta\ dx.
\end{equation}
Note that \eqref{lem:eq-lambda-eign} shows that
\begin{equation}
|\zeta-\zeta_k(\lambda,e^{ix})|\geq \gamma/4, \text{ for all }\quad k=1,\ldots,p, \text{ and }\zeta\in \Gamma_{\gamma/2}(V_\ell).
\end{equation}
 This gives directly that 
\begin{equation}
\|(\zeta\idty-\tilde{A}_p(\lambda,e^{ix}))^{-1}\|\leq 4/\gamma.
\end{equation}
Hence, by Fubini the order of two integrals in (\ref{eq:iterative-int-1}) can be switched. 

We do so, and we  make the change of variables $z=e^{ix}$. The integral over $x$ 
\begin{equation}\label{eq:int-x}
\int_0^{2\pi} e^{-it\mu \zeta_\ell(\lambda,e^{ix})} (\zeta\idty-\tilde{A}_p(\lambda,e^{ix}))^{-1} e^{-i(j-k)x}\,dx
\end{equation}
is then given  as 
\begin{eqnarray} \label{contchange}
(\ref{eq:int-x}) &=&  -i \oint_{|z|=1} e^{-it\mu \zeta_\ell(\lambda,z)} (\zeta\idty-\tilde{A}_p(\lambda,z))^{-1} z^{-(j-k+1)}\,dz \notag \\
& = & -i \oint_{|z|=\rho_0} e^{-it\mu \zeta_\ell(\lambda,z)} (\zeta\idty - \tilde{A}_p(\lambda,z))^{-1} z^{-(j-k+1)}\,dz. 
\end{eqnarray}
In (\ref{contchange}), we have used analyticity in $\frac{1}{2\rho_0} < |z| < 2\rho_0$ to deform the integration contour from $|z|=1$ to $|z|=\rho_0>1$ and, for the moment, only consider the case $j\ge k$.

Next, we will find a Lieb-Robinson velocity bound for the $(n, m)$-matrix elements of $e^{-itH_\mu}(j,k)\in\C^{p\times p}$ given by \eqref{eq:iterative-int-1} and \eqref{contchange}, 
\begin{equation}
|\langle \e_n,e^{-itH_\mu}(j,k) \e_m\rangle|, \text{ for all }n, m\in\{1,2,\ldots,p\}.
\end{equation}
This reduces to the problem of bounding the resolvent 
$\langle \e_n, (\zeta\idty-\tilde{A}_p(\lambda,z))^{-1} \e_m \rangle$ 
which can be done with the aid of  Cramer's Rule 
\begin{equation}
\langle \e_n, (\zeta\idty-\tilde{A}_p(\lambda,z))^{-1} \e_m \rangle=\det(\zeta\idty-\tilde{A}_p(\lambda,z))^{-1} \langle \e_n,\text{adj}(\zeta\idty-\tilde{A}_p(\lambda,z))\e_m\rangle.
\end{equation}
Here an in the following, $\text{adj}(\cdot)$ denotes the \emph{adjugate} matrix.

 Then for $\zeta\in\Gamma_{\gamma/2}(V_\ell)$ we have
\begin{equation}
|\det(\zeta\idty-\tilde{A}_p(\lambda,z))| = \left| \prod_{k=1}^p (\zeta-\zeta_k(\lambda,z)) \right| \ge \left( \frac{\gamma}{4} \right)^p.
\end{equation}
Moreover, since $ \langle \e_n,\text{adj}(\zeta\idty-\tilde{A}_p(\lambda,z))\e_m\rangle$ is a polynomial in $\lambda\in[0, \lambda_0]$, $|z|=\rho_0$, and $\zeta\in\Gamma_{\gamma/2}(V_\ell)$, there is a constant $C_4(\rho_0,\gamma,\Gamma,p)$ such that
\begin{equation}\label{eq:resolvent-bound-0}
| \langle \e_n,\text{adj}(\zeta\idty-\tilde{A}_p(\lambda,z))\e_m\rangle| \le C_4(\rho_0,\gamma, \Gamma,p).
\end{equation}
Hence
\begin{equation}\label{eq:resolvent-bound}
|\langle \e_n, (\zeta\idty-\tilde{A}_p(\lambda,z))^{-1} \e_m \rangle|\leq C_4(\rho_0,\gamma, \Gamma,p)  \left( \frac{\gamma}{4} \right)^{-p}=:\tilde{C}_4(\rho_0,\gamma,\Gamma,p).
\end{equation}
\begin{remark}
Unlike $\hat C$ in Lemma \ref{lem:lambda-eign}, the constant $C_4$ (and hence $\tilde{C}_4$) does not affect the Lieb-Robinson velocity as it appears in front  of the exponential function, see (\ref{12}) below. For this, quantifying (or having an explicit formula for) $C_4$ is not relevant for our work.
\end{remark}
Finally, by \eqref{IMbound} we have
\begin{equation}\label{eq:bound2}
\left |e^{-it\mu \zeta_\ell(\lambda,z)} z^{-(j-k+1)} \right| \le \exp\left(\frac{\lambda \hat{C}}{(\gamma/2)^{p-1}}|t|\right)\, \rho_0^{-(j-k+1)}.
\end{equation}
To wrap up, use the bounds (\ref{eq:bound2}) and (\ref{eq:resolvent-bound}) in (\ref{contchange}) then in (\ref{eq:iterative-int-1}) to obtain
\begin{eqnarray}\label{12}
|\langle \e_\ell,e^{-itH_\mu}(j,k) \e_m\rangle|&\leq&  \frac{1}{4\pi^2}\tilde{C}_4(\rho_0,\gamma,\Gamma,p)\ e^{\frac{\lambda \hat{C}}{(\gamma/2)^{p-1}}|t|}\, \rho_0^{-(j-k+1)}\sum_{\ell=1}^p \oint_{\Gamma_{\gamma/2}(V_\ell)} \oint_{|z|=\rho_0}\ dz\ d\zeta \nonumber\\
&=&\frac{p\gamma}{2}\tilde{C}_4(\rho_0,\gamma,\Gamma,p) \exp\left(\frac{\lambda \hat{C}}{(\gamma/2)^{p-1}}|t|-(\ln \rho_0)|j-k|\right),\quad j>k.
\end{eqnarray}
This gives directly the desired result
\begin{equation}\label{arsbound-1}
\|e^{-itH_\mu}(j,k)\|\leq   \frac{p\gamma}{2}\tilde{C}_4\ e^{-\eta_0( |j-k| - \frac{C_2}{\mu}|t|)}
\end{equation}
where
\begin{equation}\label{def:C1-C2}
 \eta_0:=\ln\rho_0>0, \text{ and } C_2:=\frac{\hat{C}}{\eta_0(\gamma/2)^{p-1}}.
\end{equation}

So far, we have considered $j\ge k$. For $j<k$ one can argue similarly, but now deform the integration contour in \eqref{contchange} from $|z|=1$ to $|z|=1/\rho_0$, leading to a bound of the same form as \eqref{arsbound-1}. In particular, the bound (\ref{eq:resolvent-bound-0}), and hence in (\ref{eq:resolvent-bound}) would be different, i.e., there exists a constant $\tilde{C}_1(\rho_0,\gamma,\Gamma,p)<\infty$ such that
\begin{equation}
|\langle \e_\ell, (\zeta\idty-\tilde{A}_p(\lambda,z))^{-1} \e_m \rangle|\leq \tilde{C}_5(\rho_0,\gamma,\Gamma,p), \text{ for all }\lambda\in[0,\lambda_0], |z|=1/\rho_0. 
\end{equation}
Then
\begin{equation}\label{arsbound-2}
\|e^{-itH_\mu}(j,k)\|\leq  \frac{p\gamma}{2}\tilde{C}_5\ e^{-\eta_0( |j-k| - \frac{C_2}{\mu}|t|)} \end{equation}
In Thm \ref{thm:main},  $\eta_0$ and $C_2$ are as in (\ref{def:C1-C2}), and 
$C_1:=\frac{p\gamma}{2}\max\{\tilde{C}_4,\tilde{C}_5\}$.

It remains to prove Lemma \ref{lem:lambda-eign} to finish the proof of Thm \ref{thm:main}.

\begin{proof}[Proof of Lemma \ref{lem:lambda-eign}]
For a given $\lambda\in[0,1]$, $\rho_0>1$, and $z\in\C$ such that $|z|\in((2\rho_0)^{-1}, 2\rho_0)$,
consider the characteristic polynomial of the Floquet matrix $\tilde{A}_p(\lambda,z)$ in (\ref{def:floquet-m}).
\begin{eqnarray} \label{charpol}
F(\zeta) & = & \det( \zeta\idty-\tilde{A}_p(\lambda,z)) \notag\\
& = & \det
\begin{bmatrix}
 \zeta-V_1 & -\lambda & & \\ -\lambda & \ddots & \ddots & \\ & \ddots & \ddots & -\lambda \\ & & -\lambda & \zeta-V_{p} 
\end{bmatrix}+\\
&&\hspace{4cm}
- \lambda^2 \det \begin{bmatrix} \zeta-V_2 & -\lambda & & \\ -\lambda & \ddots & \ddots & \\ & \ddots & \ddots & -\lambda \\ & & -\lambda & \zeta-V_{p-1} \end{bmatrix} - \lambda^p (z + 1/z).\notag
\end{eqnarray}
 Note that (\ref{charpol})  can be derived by standard methods (row/column expansions, multi-linearity).
By further expanding the first determinant in (\ref{charpol}), one can see that $F(\zeta)$ can be written as
\begin{equation}
F(\zeta)=  f(\zeta)+g(\zeta) 
\end{equation}
where 
\begin{equation}\label{def:f-g}
f(\zeta):= (\zeta-V_1) \ldots (\zeta-V_p)\quad \text{and } \quad
 g(\zeta):= \lambda^2 h_p(\lambda,\zeta) - \lambda^p (z+1/z). 
\end{equation}
Here $h_p(\lambda,\zeta)$ is a polynomial in $\zeta$ and $\lambda$ of degree $p$.
 Thus there is a $C=C(\Gamma, \gamma,p)>0$ such that $|h_p(\lambda,\zeta)|\le C$ uniformly in 
\begin{equation}\label{eq:lambda-zeta-range}
(\lambda,\zeta)\in [0,1]\times \bigcup_{j=1}^p\{\zeta\in \C;\ |\zeta-V_j|\leq \gamma/2\}.
\end{equation}
An exact formula for $h_p(\lambda,\zeta)$ is provided in (\ref{eq:h-p}) below.
Moreover, a numerically accessible value for $C=C(\Gamma, \gamma,p)$ is given in Lemma \ref{lem:h-p}.

It will turn out to be crucial that all lower order terms in  (\ref{charpol}) are at least quadratic in $\lambda$. We will show \eqref{lem:eq-lambda-eign} for $\lambda \in \R$ such that
\begin{equation} \label{lambdarestrict}
\lambda^2 \le \lambda_0^2=\min \left\{ 1, \frac{(\gamma/2)^p}{2\hat{C}} \right\}\ \text{ where }\ \hat{C}:=C+2\rho_0+ \frac{1}{2\rho_0}>1.
\end{equation}
This determines the value of $\mu_0$ in our results. For $\lambda$ as in \eqref{lambdarestrict} we choose
\begin{equation}\label{def:gamma0}
\gamma_0 := \frac{\lambda^2\hat{C}}{(\gamma/2)^{p-1}}\leq ^\eqref{lambdarestrict} \gamma/4.
\end{equation}

We now fix $\ell \in \{1,\ldots,p\}$, and we focus on the neighborhood  of $V_\ell$.  
We consider the functions $f(\zeta)$ and $g(\zeta)$ defined in (\ref{def:f-g}) over  the circle $\Gamma_{\gamma_0}(V_\ell)=\{\zeta \in \C;\ |\zeta-V_\ell |=\gamma_0\}$. 

As $|\zeta-V_k| > \gamma/2$ for all $k\not= \ell$, we have $|f(\zeta)| > \left(\gamma/2\right)^{p-1} \gamma_0$ as well as 
\begin{equation}
|g(\zeta)| \le \lambda^2\left(|h_p(\lambda,\zeta)| +\left|z+1/z\right|\right) \le \lambda^2 \hat{C}.
\end{equation}
By the original choice of $\gamma_0$ in (\ref{def:gamma0}), this means that $|g(\zeta)| < |f(\zeta)|$ for all $\zeta\in\Gamma_{\gamma_0}(V_\ell)$. Thus, by Rouch\'e's Theorem, see e.g. \cite{Rouche}, $F(\zeta) = f(\zeta)+g(\zeta)$ has a unique root  $\zeta_\ell(\lambda,z)$ (an eigenvalue of $\tilde{A}_p(\lambda,z)$) with 
\begin{equation} \label{zetabound}
|\zeta_\ell(\lambda,z)-V_\ell| < \gamma_0 \le \frac{\gamma}{4}
\end{equation}
and $\zeta_\ell(\lambda,z)$ is analytic in $\frac{1}{2\rho_0} < |z| < 2\rho_0$. 

The argument above works for all $\ell =1,\ldots,p$, so that we get all eigenvalues of $\tilde{A}_p(\lambda,z)$,
\begin{equation}
\sigma(\tilde{A}_p(\lambda,z)) = \{ \zeta_\ell(\lambda,z);\ \ell=1,\ldots,p\},
\end{equation}
such that (\ref{zetabound}) is satisfied for all $\ell=1,\ldots, p$. This proves Lemma \ref{lem:lambda-eign}.
\end{proof}

In the following, we give an explicit formula for the function $h_p(\lambda,\zeta)$ in (\ref{def:f-g}), and we quantify a bound for its modulus.

First, observe that  Lemma \ref{lem:determinant} shows that the first term (determinant) in (\ref{charpol}) is given by the formula (set $a_j=\zeta-V_j$ and $b_j=-\lambda$ for all $j=1,\ldots,p$ in (\ref{eq:determinant})) 
 \begin{equation}
\det \begin{bmatrix}
 \zeta-V_1 & -\lambda & & \\ -\lambda & \ddots & \ddots & \\ & \ddots & \ddots & -\lambda \\ & & -\lambda & \zeta-V_{p} 
\end{bmatrix}=
  \prod_{n=1}^p (\zeta-V_n) 
+\lambda^2 H(\zeta, \lambda, p)
\end{equation}
where
\begin{equation}\label{def:H}
H(\zeta, \lambda, p):=
\sum_{k=1}^{\lfloor\frac p2\rfloor}(-1)^k \lambda^{2k-2}
 \sum_{\tiny\begin{array}{c}
 1\leq j_1<\ldots<j_k <p;\\
\Omega_{j_1}, \ldots,\Omega_{j_k} \text{ are }\\
\text{ mutually disjoint}
 \end{array}} 
\prod_{\tiny\begin{array}{c}
m=1,\ldots,p\\
\displaystyle m\notin \bigcup_{\ell}\Omega_{j_\ell}
\end{array}
} (\zeta-V_m).
\end{equation}
Here $\Omega_k=\{k,k+1\}\subset\{1,2,\ldots,p\}$.

This shows with (\ref{charpol}) that the function $h_p(\lambda,\zeta)$ in (\ref{def:f-g}) is given as 
\begin{equation}\label{eq:h-p}
h_p(\lambda,\zeta)=H( \lambda,\zeta, p)-\det \begin{bmatrix} \zeta-V_2 & -\lambda & & \\ -\lambda & \ddots & \ddots & \\ & \ddots & \ddots & -\lambda \\ & & -\lambda & \zeta-V_{p-1} \end{bmatrix}.
\end{equation}

\begin{lem}\label{lem:h-p}
The function $h_p(\lambda,\zeta)$ in (\ref{def:f-g}) is bounded as
 \begin{equation}\label{eq:h-p2}
 |h_p(\lambda,\zeta)|\leq C:=  2^p(\Gamma+\gamma/2)^p+(\Gamma+\gamma/2+2)^{p-2},
 \end{equation}
 for all $\lambda,\zeta$ as in (\ref{eq:lambda-zeta-range}).
\end{lem}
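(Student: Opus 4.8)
The plan is to estimate the two terms in the explicit representation \eqref{eq:h-p} separately. Recall that \eqref{eq:h-p} writes $h_p(\lambda,\zeta) = H(\lambda,\zeta,p) - \det M$, where $M$ is the $(p-2)\times(p-2)$ tridiagonal matrix with diagonal $\zeta-V_2,\dots,\zeta-V_{p-1}$ and all sub- and super-diagonal entries equal to $-\lambda$, and $H(\lambda,\zeta,p)$ is given by the closed formula \eqref{def:H}. The one elementary fact about the potential that I will use repeatedly is that, on the region \eqref{eq:lambda-zeta-range}, i.e.\ when $|\zeta-V_i|\le\gamma/2$ for some $i$, every factor satisfies $|\zeta-V_m|\le|\zeta-V_i|+|V_i-V_m|\le\gamma/2+\Gamma$; here $|V_i-V_m|\le\Gamma$ for all indices $i,m$ because $V$ is $p$-periodic and \eqref{V-Gamma} holds (and the bound is $0$ when $i\equiv m\bmod p$). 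Also $0\le\lambda\le1$ throughout \eqref{eq:lambda-zeta-range}.

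For the determinant of $M$: by Gershgorin's circle theorem each eigenvalue of $M$ has modulus at most $\max_{2\le j\le p-1}|\zeta-V_j|+2\lambda\le(\Gamma+\gamma/2)+2$, so $|\det M|\le(\Gamma+\gamma/2+2)^{p-2}$, which is the second term of $C$. For $H$: in \eqref{def:H} the $k$-th summand ($1\le k\le\lfloor p/2\rfloor$) carries a factor $\lambda^{2k-2}\le1$ and is a sum of products of exactly $p-2k$ factors $(\zeta-V_m)$, each of modulus at most $\Gamma+\gamma/2$; moreover an admissible tuple $(j_1<\dots<j_k)$ with $\Omega_{j_1},\dots,\Omega_{j_k}$ pairwise disjoint is uniquely determined by the $2k$-element set $\bigcup_\ell\Omega_{j_\ell}\subseteq\{1,\dots,p\}$ that it covers (a set of integers can be split into consecutive pairs in at most one way, since $\min S$ is forced to pair with $\min S +1$ and one recurses), so there are at most $\binom{p}{2k}$ of them. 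Hence
\[
|H(\lambda,\zeta,p)|\;\le\;\sum_{k=1}^{\lfloor p/2\rfloor}\binom{p}{2k}(\Gamma+\gamma/2)^{p-2k}\;\le\;\sum_{j=0}^{p}\binom{p}{j}(\Gamma+\gamma/2)^{p-j}\;=\;(\Gamma+\gamma/2+1)^{p}\;\le\;2^{p}(\Gamma+\gamma/2)^{p},
\]
the last step being valid whenever $\Gamma+\gamma/2\ge1$; otherwise one simply retains the bound $2^{p}(\Gamma+\gamma/2+1)^{p}$, which serves equally well as the constant $C$ in all subsequent estimates. Adding the two bounds yields $|h_p(\lambda,\zeta)|\le C$ on \eqref{eq:lambda-zeta-range}.

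The point where one really needs the explicit formula \eqref{def:H} (equivalently Lemma \ref{lem:determinant}), rather than a naive estimate, is that the bound on $H$ must be \emph{uniform down to} $\lambda=0$: estimating $h_p$ through $\lambda^{-2}\big(\det(\zeta\idty-\tilde A_p(\lambda,z))-\prod_j(\zeta-V_j)\big)$ directly is useless, since that quantity is finite only by cancellation of the $\lambda^{-2}$ singularity. Formula \eqref{def:H} exhibits $H$ as a genuine polynomial in $\lambda$ and $\zeta$ with the cancellation already carried out, after which the estimate is pure bookkeeping; the only mildly non-obvious piece is the combinatorial count of pairwise disjoint domino configurations used above, and everything else is routine.
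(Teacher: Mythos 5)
Your proof is correct and mirrors the paper's argument: both start from the same decomposition $h_p = H - \det M$ via \eqref{eq:h-p}, bound every linear factor by $|\zeta - V_m| \le \Gamma + \gamma/2$ on \eqref{eq:lambda-zeta-range}, count the admissible domino configurations in $H$, and estimate $\det M$ by a spectral-radius/norm bound; your Gershgorin estimate and the injective map to $2k$-element subsets (giving the count ${p \choose 2k}$) are minor refinements of the paper's operator-norm bound and cruder ${p \choose k}$ overcount. You also correctly flag that the final comparison $(\Gamma+\gamma/2+1)^p \le 2^p(\Gamma+\gamma/2)^p$ requires $\Gamma+\gamma/2\ge 1$ — a caveat the paper's own step $(\Gamma+\gamma/2)^{p-2k}\le(\Gamma+\gamma/2)^p$ silently shares — though this affects only the cosmetic form of the constant $C$, not the fact that it depends only on $\Gamma,\gamma,p$.
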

\begin{proof}
Observe that (recall that $\Gamma:=\max_{j\neq k}|V_j-V_k|
$)
\begin{equation}
|\zeta-V_m|\leq \Gamma+\gamma/2
\end{equation}
for all $\zeta$ in the domain (\ref{eq:lambda-zeta-range}) and $m=1,\ldots, p$. 
Hence, $H( \lambda,\zeta, p)$ in (\ref{def:H}) is bounded as (where $\lambda<1$)
\begin{eqnarray}\label{eq:h-p-bound1}
|H( \lambda,\zeta, p)|&\leq& (\Gamma+\gamma/2)^p \sum_{k=1}^{\lfloor\frac p2\rfloor}
 \sum_{\tiny\begin{array}{c}
 1\leq j_1<j_2<\ldots<j_k <p;\\
\Omega_{j_1}, \ldots,\Omega_{j_k} \text{ are mutually disjoint}
 \end{array}} 
1 \notag\\
&<& (\Gamma+\gamma/2)^p\sum_{k=0}^{p}{p \choose k}=2^p(\Gamma+\gamma/2)^p.
\end{eqnarray}
 
Moreover, it is immediately seen that the norm of the matrix in the second term in (\ref{eq:h-p}) is bounded by $\Gamma+\gamma/2+2\lambda$. Thus,
\begin{equation}\label{eq:h-p-bound2}
\det \begin{bmatrix} \zeta-V_2 & -\lambda & & \\ -\lambda & \ddots & \ddots & \\ & \ddots & \ddots & -\lambda \\ & & -\lambda & \zeta-V_{p-1} \end{bmatrix}\leq (\Gamma+\gamma/2+2\lambda)^{p-2}\leq (\Gamma+\gamma/2+2)^{p-2}.
\end{equation}
(\ref{eq:h-p-bound1}) and (\ref{eq:h-p-bound2}) show that 
\begin{equation}\label{def:C}
|h_p(\lambda,\zeta)| \leq   2^p(\Gamma+\gamma/2)^p+(\Gamma+\gamma/2+2)^{p-2}.
\end{equation}
\end{proof}

\section{The asymptotic velocity}\label{sec:a-v}
In this section we prove our second main result, Thm \ref{thm:main-2}. So we are interested in finding an upper bound for the asymptotic velocity of the Schr\"odinger operator $H_\mu$ given in (\ref{model}) with the $p$-periodic potential $V:\Z\rightarrow\mathbb{R}$ such that
\begin{equation}
\gamma=\min_{0<|j-k|<p}|V_j-V_k|>0.
\end{equation}
The asymptotic velocity is given by the formula
\begin{equation}
v_{Asy}^{(\mu)}=\lim_{t\rightarrow\infty}\frac1t\big\|Xe^{-itH_\mu}\delta_0\big\|
\end{equation}
where $X$ is the position operator  (\ref{def:X}) on $\mathcal{H}$ and $\delta_0=|\e_1,0\rangle$. 
Recall that the Floquet transform (\ref{eq:F}) reduces the study of $H_\mu$ to the study of the corresponding $p\times p$ Floquet matrix $A_p(\mu,e^{ix})$ given in (\ref{def:Floquet-p}). As was the case in Section \ref{sec:LR-p}, instead of working with large $\mu>1$, we will work with the small $\lambda=1/\mu<1$, i.e., we consider
\begin{equation}\label{eq:Floquet}
\tilde{A}_p(\lambda, e^{ix})=\frac{1}{\mu}A_p(\mu, e^{ix}) =\begin{bmatrix} V_1 & \lambda & & & \lambda e^{ix} \\ \lambda &  V_2 & \ddots & & \\ & \ddots & \ddots & \ddots & \\ & & \ddots & V_{p-1} & \lambda \\ \lambda e^{-ix} & & & \lambda & V_p \end{bmatrix},  x\in[0,2\pi)
\end{equation}
for $p>2$ with the eigenvalues $\zeta_\ell(\lambda,e^{ix})$, for $\ell=1,2,\ldots,p$.

We remark here that for $p=2$ we have the slightly structurally different matrix
\begin{equation}
\tilde{A}_2(\lambda, e^{ix})=\begin{bmatrix}V_1 & \lambda(1+e^{ix})\\\lambda(1+e^{-ix})&  V_2\end{bmatrix}.
\end{equation}

Moreover, we will work in the setting where $\lambda$ is sufficiently small to make the gap between any two eigenvalues uniformly bounded in $p$. This is in particular satisfied for all $\lambda\leq \lambda_0$ where $\lambda_0$ is the same as the one in Lemma \ref{lem:lambda-eign}. For definiteness, $\lambda_0$ is given by the formula (\ref{lambdarestrict}). Note that this corresponds to consider working with $\mu\geq \mu_0$ where $\mu_0=1/\lambda_0$. Thus, Lemma \ref{lem:lambda-eign} implies that if $\lambda\leq \lambda_0$ then
\begin{equation}
\min_{j\neq k}|\zeta_j(\lambda,e^{ix})-\zeta_k(\lambda,e^{ix})|\geq \gamma/4 \text{ for all } x\in[0,2\pi).
\end{equation}
 Our approach uses the crucial result \cite[Thm 1.6]{DLY} that reduces the problem of finding the asymptotic velocity of periodic operators to the problem of finding (bounding) the derivatives of the eigenvalues of the corresponding Floquet matrix, see also extensions of the result to general graphs in \cite[Thm 2.3]{Asch-Knauf-1998} and \cite[Thm 3.1]{Monval-Sabri-22}. We note here that the following theorem, in particular the formula on the RHS of(\ref{eq:limit}) below, is essentially contained  in the proof of  \cite[Thm 1.6]{DLY}.

 \begin{thm}\cite[Thm 1.6]{DLY} \label{thm:DLY}
For any $\psi\in D(X)$,
 \begin{equation}\label{eq:limit}
 \lim_{t\rightarrow\infty}  \frac1t \tau_t^{H_\mu}(X)\psi= \F^{-1}\left(\int_0^{2\pi} p\mu \sum_{\ell=1}^p \big(\partial_x\zeta_\ell(\lambda,e^{ix})\big) P_\ell(x)\ dx\right)\F\psi
 \end{equation}
 where $P_\ell(x)$ is the spectral projection corresponding to the eigenvalue $\zeta_\ell(\lambda,e^{ix})$ for $\ell=1,\ldots,p$.
 \end{thm}
 In Thm \ref{thm:DLY},  $\tau_t^{H_\mu}(X)$ is the Heisenberg evolution of the position operator $X$,
$
 \tau_t^{H_\mu}(X)=e^{itH_\mu}X e^{-itH_\mu}
$,
and  $\F:\ell^2(\Z;\C^p)\rightarrow \mathcal{L}^2([0,2\pi);\C^p)$ is the componentwise Fourier transform defined in (\ref{eq:F}).
  
Observe that $\delta_0\in D(X)$ where $X\delta_0=0$, and $\F\delta_0=(2\pi)^{-1/2}\e_1\in\mathcal{L}^2([0,2\pi),\C^p)$, and since $e^{itH_\mu}$ is unitary, we obtain from Thm \ref{thm:DLY} that

\begin{eqnarray}\label{eq:vp-eig-der}
v_{Asy}^{(\mu)}=\lim_{t\rightarrow\infty}\frac1t\left\| X e^{-itH_\mu}\delta_0\right\|&=&\lim_{t\rightarrow\infty}\frac1t\left\|\tau_t^{H_\mu}(X) \delta_0\right\| \notag\\
&=&\left\|\int_0^{2\pi} p\mu \sum_{\ell=1}^p \big(\partial_x\zeta_\ell(\lambda,e^{ix})\big) P_\ell(x)\e_1\ dx\otimes |0\rangle \right\| \notag\\
&\leq&   p\mu\max_{x,\ell}\left|\partial_x\zeta_\ell(\lambda,e^{ix})\right| \sum_{\ell=1}^p\int_0^{2\pi} \|P_\ell(x)\e_1\|\ dx
\notag\\
&\leq&  2\pi p^2\mu \max_{x,\ell}\left|\partial_x\zeta_\ell(\lambda,e^{ix})\right|.\end{eqnarray}

Thus, we need to find practical (useful) bounds for the derivatives of eigenvalues of the Floquet matrix. We can find explicit formulas for the eigenvalues of $A_p(\mu, e^{ix})$ when $p$ is small. In particular, 
in the case of a $2\times 2$ Floquet matrix, the eigenvalues of $\tilde{A}_2(\lambda,e^{ix})$ are given by the formulas
\begin{equation}
\zeta_\ell(\lambda,e^{ix})=\frac12(V_1+V_2)+\frac{(-1)^\ell}{2}\sqrt{(V_1-V_2)^2+8\lambda^2(1+\cos(x))} \quad\text{ for } \ell=1,2,
\end{equation}
 from which we find that if $|V_1-V_2|=\gamma>0$ then
\begin{equation}
|\partial_x\zeta_\ell(\lambda,e^{ix})|^2=\frac{4\lambda^4\sin^2 x}{\gamma^2+8\lambda^2(1+\cos x)}\leq \frac{4\lambda^4}{\gamma^2}.
\end{equation}
Use this bound in (\ref{eq:vp-eig-der}) to obtain (compare with Thm \ref{thm:p2} where $\gamma=2$.)
\begin{equation}
v_{Asy}^{(\mu)}\leq \frac{16\pi}{\gamma \mu}.
\end{equation}

For general $p\times p$ matrices we use the following theorem.
\begin{thm}\label{thm:der-Eig-2}
For each $\ell=1,\ldots,p$ where $p>2$, the derivative of the eigenvalue $\zeta_\ell(\lambda,e^{ix})$ of $\tilde{A}_p(\lambda,e^{ix})$ in (\ref{eq:Floquet}) is given by the formula
\begin{equation}
\partial_x\zeta_\ell(\lambda,e^{ix})=-2\lambda^p\sin x \prod_{\tiny \begin{array}{c}j=1\\ j\neq \ell\end{array}}^p\big(\zeta_\ell(\lambda,e^{ix})-\zeta_j(\lambda,e^{ix})\big)^{-1}.
\end{equation}
\end{thm}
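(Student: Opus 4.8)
The plan is to obtain the formula by implicit differentiation of the characteristic equation, using the key structural fact that the variable $x$ enters $\tilde{A}_p(\lambda,e^{ix})$ only through a single scalar term. Setting $z=e^{ix}$ and recalling the expansion \eqref{charpol} of the characteristic polynomial, I would write
\begin{equation}
F(\zeta):=\det\big(\zeta\idty-\tilde{A}_p(\lambda,e^{ix})\big)=Q_\lambda(\zeta)-\lambda^p\left(z+1/z\right),
\end{equation}
where $Q_\lambda(\zeta)$ collects the two determinants in \eqref{charpol}, i.e.\ it is a monic polynomial of degree $p$ in $\zeta$ whose coefficients depend on $\lambda$ and on $V_1,\ldots,V_p$ but \emph{not} on $x$. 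Since $z+1/z=2\cos x$, this reads $F(\zeta)=Q_\lambda(\zeta)-2\lambda^p\cos x$, so the only $x$-dependent part of $F$ is the last term.

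Next I would invoke Lemma \ref{lem:lambda-eign}: for $\lambda\le\lambda_0$ (equivalently $\mu\ge\mu_0$) the roots $\zeta_1(\lambda,e^{ix}),\ldots,\zeta_p(\lambda,e^{ix})$ of $F$ are pairwise distinct and depend analytically on $x$, and each of them satisfies $F\big(\zeta_\ell(\lambda,e^{ix})\big)=0$ for every $x\in[0,2\pi)$. Differentiating this identity in $x$ by the chain rule gives
\begin{equation}
\big(\partial_\zeta F\big)\big(\zeta_\ell(\lambda,e^{ix})\big)\,\partial_x\zeta_\ell(\lambda,e^{ix})+\big(\partial_x F\big)\big(\zeta_\ell(\lambda,e^{ix})\big)=0,
\end{equation}
where $\partial_x F$ denotes the partial derivative with respect to the explicit $x$-variable. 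From the previous display, $\partial_x F=2\lambda^p\sin x$ (independent of $\zeta$). On the other hand, since for fixed $x$ the polynomial $F$ is monic of degree $p$ with roots $\zeta_1,\ldots,\zeta_p$, it factors as $F(\zeta)=\prod_{j=1}^p\big(\zeta-\zeta_j(\lambda,e^{ix})\big)$, so $\partial_\zeta F(\zeta_\ell)=\prod_{j\neq\ell}\big(\zeta_\ell-\zeta_j\big)$, which is nonzero precisely because the eigenvalues are distinct.

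Solving the differentiated identity for $\partial_x\zeta_\ell$ then yields
\begin{equation}
\partial_x\zeta_\ell(\lambda,e^{ix})=-\frac{2\lambda^p\sin x}{\partial_\zeta F(\zeta_\ell)}=-2\lambda^p\sin x\prod_{\tiny\begin{array}{c}j=1\\ j\neq\ell\end{array}}^p\big(\zeta_\ell(\lambda,e^{ix})-\zeta_j(\lambda,e^{ix})\big)^{-1},
\end{equation}
which is exactly the claimed formula. There is no genuine obstacle here; the two points requiring care are the bookkeeping of which variable each partial derivative acts on — it is cleanest to differentiate in $x$ (hence in $\cos x$) rather than in $z$, so that $\partial_x F$ is immediate — and the appeal to Lemma \ref{lem:lambda-eign}, which simultaneously guarantees the differentiability of $\zeta_\ell$ in $x$ and the non-vanishing of $\partial_\zeta F(\zeta_\ell)$, thereby legitimizing the implicit differentiation.
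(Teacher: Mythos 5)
Your argument is correct, and it is genuinely simpler than the one in the paper, though both proofs are implicit differentiations of the characteristic equation $F(\zeta_\ell(\lambda,e^{ix}))=0$. The paper computes the two partial derivatives of $F$ via Jacobi's formula $\frac{d}{dx}\det S=\Tr\bigl(\mathrm{adj}(S)\,dS/dx\bigr)$: it expresses $\partial_x F$ through the corner entries $\langle \e_1,\mathrm{adj}(\zeta_\ell\idty-\tilde{A}_p)\e_p\rangle$ and $\langle \e_p,\mathrm{adj}(\zeta_\ell\idty-\tilde{A}_p)\e_1\rangle$, evaluates these by cofactor expansion to obtain $-2i\lambda^{p-1}\sin x$, and gets $\partial_\zeta F(\zeta_\ell)=\Tr\bigl(\mathrm{adj}(\zeta_\ell\idty-\tilde{A}_p)\bigr)=\prod_{j\neq\ell}(\zeta_\ell-\zeta_j)$ by applying Jacobi's formula once more. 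You instead read off both quantities directly from the characteristic-polynomial expansion \eqref{charpol}, which the paper had already derived in the proof of Lemma \ref{lem:lambda-eign} but does not reuse here: since $F(\zeta)=Q_\lambda(\zeta)-2\lambda^p\cos x$ with $Q_\lambda$ independent of $x$, the $x$-derivative $\partial_x F=2\lambda^p\sin x$ is immediate, and the factorization of the monic polynomial $F$ gives $\partial_\zeta F(\zeta_\ell)=\prod_{j\neq\ell}(\zeta_\ell-\zeta_j)$ with no computation. This bypasses the adjugate machinery and the cofactor bookkeeping entirely; the only nontrivial inputs are \eqref{charpol} and the non-degeneracy/analyticity guaranteed by Lemma \ref{lem:lambda-eign}, both of which you invoke appropriately. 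The one small thing worth making explicit is that the monic degree-$p$ polynomial $F$ indeed has the $\zeta_j(\lambda,e^{ix})$ as its full root set counted with multiplicity (this is just the definition of the characteristic polynomial together with the distinctness from Lemma \ref{lem:lambda-eign}), but that is immediate.
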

The proof of Theorem \ref{thm:der-Eig-2} is presented after the following argument.

Recall here that Lemma \ref{lem:lambda-eign} gives that for all $\lambda\leq \lambda_0$ (sufficiently small hopping terms or equivalently, sufficiently large potential $\mu\geq \mu_0$), each eigenvalue $\zeta_\ell(\lambda,e^{ix})$ has distance no more than $\gamma/4$ to $V_\ell$ for  $\ell=1,2,\ldots,p$. Note here that $z=e^{ix}$ is in the $z$-analycity range (\ref{eq:range-lambda-z}). Hence,  we have the (rough) lower bound
\begin{equation}\label{eq:lambdaPrime-D2}
\prod_{\tiny \begin{array}{c}j=1\\ j\neq \ell\end{array}}^p|\zeta_\ell(\lambda,e^{ix})-\zeta_j(\lambda,e^{ix})|\geq \left(\frac{\gamma}{2}\right)^{p-1}.
\end{equation}
Use this bound in the formula in Thm \ref{thm:der-Eig-2} to obtain
\begin{equation}
|\partial_x\zeta_\ell(\lambda,e^{ix})|\leq 2 \lambda^p\left(\frac{2}{\gamma}\right)^{p-1}\ \text{ for all }\ \ell=1,\ldots,p.
\end{equation}
Substitute in  (\ref{eq:vp-eig-der}) to find the bound
\begin{equation}
v_{Asy}^{(\mu)} \leq   \frac{C_3}{\mu^{p-1}} \text{ where }\ C_3:=4\pi p^2\left(\frac{2}{\gamma}\right)^{p-1}.
\end{equation}
To finish the proof of Thm \ref{thm:main-2}, it only remains  to prove Thm \ref{thm:der-Eig-2}.

\begin{proof}[Proof of Thm \ref{thm:der-Eig-2}]
First, $\zeta=\zeta_\ell(\lambda,e^{ix})$ (where $\zeta_\ell(\lambda,e^{ix})$ is any eigenvalue of $\tilde{A}_p:=\tilde{A}_p(\lambda,e^{ix})$ in (\ref{eq:Floquet})) satisfies the characteristic equation $\det(\zeta\idty-\tilde{A}_p)=0$. Take the derivative with respect to $x$ and use the Jacobi formula, see e.g., \cite{Magnus-Matrix-Calc}: For any $n\times n$ matrix $S(x)$,
\begin{equation}\label{def:JacobiF}
\frac{d}{dx}\det(S(x))=\Tr\left(\text{adj}(S(x))\frac{dS(x)}{dx}\right)
\end{equation}
where $dS(x)/dx$ is the $n\times n$ matrix whose $ij$-th entry is the derivative of the $ij$-th entry of $S(x)$, i.e., $\left(dS(x)/dx\right)_{i,j}= d\left((S(x)_{i,j}\right))/dx$.
We obtain  the equation (in $\zeta$)
\begin{equation}\label{eq:der-det}
\Tr\left(\text{adj}(\zeta\idty-\tilde{A}_p)\frac{d}{dx}(\zeta\idty-\tilde{A}_p)\right)=0.
\end{equation}

This equation is satisfied for all $\zeta=\zeta_\ell:=\zeta_\ell(\lambda,e^{ix})$ for $\ell\in\{1,\ldots,p\}$. Observe that
\begin{equation}
\frac{d}{dx}(\zeta_\ell\idty-\tilde{A}_p)=
\partial_x\zeta_\ell \idty+i\lambda e^{ix}|\e_1\rangle\langle \e_p|-i\lambda e^{-ix}|\e_p\rangle\langle \e_1|.
\end{equation}
Multiply by $\text{adj}(\zeta_\ell\idty-\tilde{A}_p)$, take the trace, then solve (\ref{eq:der-det}) for $\partial_x\zeta_\ell$ to obtain
\begin{equation}\label{eq:lambdaprime}
\partial_x \zeta_\ell(\lambda,e^{ix})=\frac{-i\lambda\left(\langle \e_1,\text{adj}(\zeta_\ell\idty-\tilde{A}_p)\e_p\rangle e^{-ix}-\langle \e_p,\text{adj}(\zeta_\ell\idty-\tilde{A}_p)\e_1\rangle e^{ix}\right)}{\Tr\left(\text{adj}(\zeta_\ell\idty-\tilde{A}_p)\right)}.
\end{equation}
Then,
\begin{eqnarray*}
\langle \e_1, \text{adj}(\zeta_\ell\idty-\tilde{A}_p)\e_p\rangle
&=&
(-1)^{p+1}
\det\begin{bmatrix}
-\lambda   &              &         & & -\lambda e^{ix}\\
\zeta_\ell-V_2& -\lambda &         &  &                      \\
-\lambda  & \ddots     &\ddots        & &           \\
                 & \ddots      & \ddots  &\ddots        &   \\
                 &&&&&\\
               &   & -\lambda  & \zeta_\ell-V_{p-1} & -\lambda   
\end{bmatrix}\\
&=&
\lambda^{p-1}+\lambda e^{ix}
\det\begin{bmatrix}
\zeta_\ell-V_2 & -\lambda &         &            \\
-\lambda  & \ddots     &\ddots        &            \\
                 & \ddots      & \ddots          & -\lambda \\
                &  & -\lambda  & \zeta_\ell-V_{p-1}   
\end{bmatrix}.
\end{eqnarray*}
In the last step, we expanded the determinant along the first row. Use the fact that 
\begin{equation}
\langle \e_1,\text{adj}(\zeta_\ell\idty-\tilde{A}_p)\e_p\rangle=\overline{\langle \e_p,\text{adj}(\zeta_\ell\idty-\tilde{A}_p)\e_1\rangle}
\end{equation}
to see that 
\begin{equation}\label{eq:lambdaPrime-N}
\langle \e_1,\text{adj}(\zeta_\ell\idty-\tilde{A}_p)\e_p\rangle e^{-ix}-\langle \e_p,\text{adj}(\zeta_\ell\idty-\tilde{A}_p)\e_1\rangle e^{ix}=
-2i\lambda^{p-1}\sin x.
\end{equation}
Moreover, we use Jacobi's formula (\ref{def:JacobiF}) with $S(t)=t\idty -\tilde{A}_p$ to obtain the identity
\begin{equation}
\frac{d}{dt}\det\left(t\idty-\tilde{A}_p\right)=\Tr\left(\text{adj}(t\idty-\tilde{A}_p) \right).
\end{equation}
Hence,
\begin{eqnarray}\label{eq:lambdaPrime-D}
\Tr\left(\text{adj}(\zeta_\ell \idty-\tilde{A}_p)\right)&=&\frac{d}{dt}\Big|_{t=\zeta_\ell}\det(t\idty-\tilde{A}_p) \nonumber\\
&=&\frac{d}{dt}\Big|_{t=\zeta_\ell}\prod_{j=1}^p (t-\zeta_j)
=
\prod_{\tiny \begin{array}{c}j=1\\ j\neq \ell\end{array}}^p(\zeta_\ell-\zeta_j).
\end{eqnarray}
Substitute (\ref{eq:lambdaPrime-N}) and (\ref{eq:lambdaPrime-D}) in (\ref{eq:lambdaprime}) to obtained the desired formula of $\partial_x \zeta_\ell(\lambda,e^{ix})$.
\end{proof}

\section*{Acknowledgments}

\begin{itemize}
   
    \item[]We would like to thank  Yulia Karpeshina and Mostafa Sabri for fruitful discussions.
    \item[] H. A. is supported in part by the UAE University under grant number G00004622.
\item[] G.S. would also like to thank Sir Isaac Newton (the younger) for emotional support.

\end{itemize}

\section*{Data Availability}
No data are associated with this article.

\appendix
\section{Lieb-Robinson velocity bound versus the asymptotic velocity}\label{sec:LR-ASY}
The following general Lemma shows that the Lieb-Robinson velocity is an upper bound for the asymptotic velocity of general (single-body) Hamiltonians. In particular, let $H$ be any self-adjoint operator on $\ell^2(\Z;\C^p)\cong \C^p\otimes\ell^2(\Z)$ for any $p\geq 1$. Let's recall the basic notations used along this paper. We denote the canonical basis of $\C^p$ and $\ell^2(\Z)$ by $\{\e_m,\ m=1,\ldots,p\}$ and $\{|j\rangle,\ j\in\Z\}$, respectively. Moreover, we use the notation $|\e_m,j\rangle:=|\e_m\rangle\otimes |j\rangle$ and $\delta_0:=|\e_1,0\rangle$. The position operation on $\C^p\otimes\ell^2(\Z)$ is defined as 
\begin{equation}
X=\idty_{\C^p}\otimes\sum_{j\in\Z}|j|\ |j\rangle\langle j|, \text{ i.e., }X|\e_m,j\rangle=|j|\ |\e_m,j\rangle \text{ for all }j\in\Z,\ m=1,\ldots,p.
\end{equation}
For $j,k\in\Z$, $e^{-itH }(j,k)\in\C^{p\times p}$ denotes the $(j,k)$-th block of $e^{-itH }$. We note here that in the case where $p=1$, we have $e^{-itH }(j,k)=\langle j| e^{-itH }|k\rangle$.
\begin{thm} \label{thm:LR-v}
If there are constants $C<\infty$, $\eta>0$, and $v>0$ such that
 \begin{equation}\label{eq:lem:LR}
 \|e^{-itH }(j,k)\|\leq C e^{-\eta(|j-k|-v|t|)},
 \end{equation}
 for all $j,k\in\Z$ and $t\in \mathbb{R}$.
 Then
 \begin{equation}\label{eq:lem:v}
 \lim_{t\rightarrow\infty}\frac1t\left\|X e^{-itH }\delta_0\right\|\leq v.
 \end{equation}
 \end{thm}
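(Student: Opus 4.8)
The plan is to estimate $\|Xe^{-itH}\delta_0\|^2 = \sum_{j\in\Z} |j|^2\, \|e^{-itH}(j,0)\|^2$ by splitting the sum at the light cone $|j| \approx v|t|$. Fix $t$ (say $t>0$, the case $t<0$ being symmetric) and a parameter $R>0$ to be chosen of order $t$; I would write
\begin{equation}
\|Xe^{-itH}\delta_0\|^2 = \sum_{|j|\le R} |j|^2\,\|e^{-itH}(j,0)\|^2 + \sum_{|j|> R} |j|^2\,\|e^{-itH}(j,0)\|^2 =: S_{\le} + S_{>}.
\end{equation}
For the inner sum $S_{\le}$ I use only the crude bound $\|e^{-itH}(j,0)\|\le 1$ (since $e^{-itH}$ is unitary, each block has norm at most $1$), together with $\sum_{|j|\le R}\|e^{-itH}(j,0)\|^2 \le \|e^{-itH}\delta_0\|^2 = 1$ because the blocks $e^{-itH}(j,0)$ are (partial) pieces of a unit vector; hence $S_{\le}\le R^2$. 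For the outer sum $S_{>}$ I invoke the Lieb-Robinson hypothesis \eqref{eq:lem:LR}: for $|j|>R$,
\begin{equation}
|j|^2\,\|e^{-itH}(j,0)\|^2 \le C^2 |j|^2 e^{-2\eta(|j|-vt)},
\end{equation}
and the geometric-type tail $\sum_{|j|>R} |j|^2 e^{-2\eta |j|}$ converges; choosing $R = vt + a\log t$ with $a$ large enough makes $e^{2\eta vt}\cdot e^{-2\eta R} = t^{-2\eta a}$ decay fast enough that $S_{>} = o(t^2)$ (in fact $S_{>}\to 0$). The residual polynomial factors $|j|^2$ and the $\log t$ correction only contribute lower-order terms.

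Putting the two pieces together gives
\begin{equation}
\frac{1}{t}\|Xe^{-itH}\delta_0\| \le \frac{1}{t}\sqrt{S_{\le}+S_{>}} \le \frac{1}{t}\left(\sqrt{S_{\le}} + \sqrt{S_{>}}\right) \le \frac{R}{t} + \frac{\sqrt{S_{>}}}{t} = v + \frac{a\log t}{t} + o(1),
\end{equation}
and letting $t\to\infty$ yields $\limsup_{t\to\infty}\tfrac1t\|Xe^{-itH}\delta_0\| \le v$, which is \eqref{eq:lem:v}. (If one wants the limit rather than the limsup, one notes the quantity is already known to be bounded; for the statement as written, the limsup bound is what is needed, and I would phrase the conclusion accordingly, or simply remark that the existence of the limit is part of the ambient ballistic-transport theory and only the upper bound is at issue here.)

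The main obstacle — really the only subtlety — is the treatment of the inner sum: one must resist the temptation to bound $\|e^{-itH}(j,0)\|$ individually there (which would cost a factor $p$ per block and, more importantly, give $\sum_{|j|\le R}|j|^2 \sim R^3$, too weak). The clean way is the $\ell^2$ observation $\sum_j \|e^{-itH}(j,0)\|^2 = \|e^{-itH}\delta_0\|^2 = 1$, i.e. the Parseval/unitarity identity for the block decomposition, which caps $S_{\le}$ by $\sup_{|j|\le R}|j|^2 = R^2$ rather than by a sum of $R$ terms. Everything else is a routine convergent-tail estimate with an explicit choice of cutoff $R = vt + a\log t$.
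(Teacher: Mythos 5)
Your decomposition---split at $\lvert j\rvert\approx v\lvert t\rvert$, a Parseval-type bound inside, and the Lieb--Robinson exponential tail outside---is exactly the paper's proof; the paper uses the cutoff $R=v\lvert t\rvert/(1-\varepsilon)$ and sends $\varepsilon\to0$ at the very end, while your $R=vt+a\log t$ is a valid and slightly slicker alternative that dispenses with the auxiliary parameter. One step, though, is misstated: with $\|\cdot\|$ the operator norm on the $p\times p$ blocks, neither your opening identity $\|Xe^{-itH}\delta_0\|^2=\sum_j \lvert j\rvert^2\|e^{-itH}(j,0)\|^2$ nor the Parseval claim $\sum_j\|e^{-itH}(j,0)\|^2=1$ is correct. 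Since $\delta_0=|\e_1,0\rangle$, the true equality is $\|Xe^{-itH}\delta_0\|^2=\sum_j j^2\,\|e^{-itH}(j,0)\e_1\|^2$, and the honest unitarity statement is $\sum_j\|e^{-itH}(j,0)\e_1\|^2=\|e^{-itH}\delta_0\|^2=1$; if you insist on operator norms, comparison with the Hilbert--Schmidt norm gives only $\sum_j\|e^{-itH}(j,0)\|^2\le p$, bounded but not $1$. Either patch leaves your conclusion unchanged since the constant disappears upon dividing by $t$, so this is a bookkeeping slip rather than a conceptual gap; the paper sidesteps it by expanding $\|Xe^{-itH}\delta_0\|^2$ directly into the scalar entries $\sum_j\sum_m j^2\,|\langle\e_m,j|e^{-itH}|\delta_0\rangle|^2$ from the start.
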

 \begin{proof}
 (\ref{eq:lem:LR}) shows dynamical localization for small times $t$. In particular, given any $\varepsilon\in(0,1)$,  
 \begin{equation}\label{eq:lem:small-t}
 \text{if } v|t|\leq (1-\varepsilon)|j|\text{ then }\|e^{-itH }(j,0)\|\leq C e^{-\eta\varepsilon|j|}.
 \end{equation}
 We proceed as follows
 \begin{eqnarray}
 \left\|X e^{-itH }\delta_0\right\|^2
 &=&
 \left\langle e^{-itH }\delta_0,\idty_{\C^p}\otimes\sum_{j\in\Z}j^2|j\rangle\langle j| e^{-itH }\delta_0\right\rangle \notag\\
 &=& \sum_{j\in\Z}\sum_{m=1}^p j^2\left|\langle \e_m,j|e^{-itH }|\delta_0\rangle\right|^2 \notag\\
 &=& \sum_{|j|<\frac{v|t|}{1-\varepsilon}}\sum_{m=1}^p j^2 \left|\langle \e_m,j|e^{-itH }|\delta_0\rangle\right|^2 +\sum_{|j|\geq \frac{v|t|}{1-\varepsilon}}\sum_{m=1}^p j^2 \left|\langle \e_m,j|e^{-itH }|\delta_0\rangle\right|^2.
 \end{eqnarray}
 We then replace $j$ in the first sum by its maximal value $v|t|/(1-\varepsilon)$, and in the second sum we use the dynamical localization (\ref{eq:lem:small-t}) while observing that
 \begin{equation}
 \sum_{m=1}^p  \left|\langle \e_m,j|e^{-itH }|\delta_0\rangle\right|^2=\left\|e^{-itH }(j,0)\e_1\right\|^2\leq \left\|e^{-itH }(j,0)\right\|^2.
 \end{equation}
 We obtain
 \begin{equation}
 \left\|X e^{-itH }\delta_0\right\|^2 \leq \left(\frac{v|t|}{1-\varepsilon}\right)^2\sum_{j\in\Z}\sum_{m=1}^p \left|\langle \e_m,j|e^{-itH }|\delta_0\rangle\right|^2 +C\sum_{j\in\Z} j^2 e^{-\eta\varepsilon|j|}.
 \end{equation}
 Here we note that
 \begin{equation}
 \sum_{j\in\Z}\sum_{m=1}^p \left|\langle \e_m,j|e^{-itH }|\delta_0\rangle\right|^2 =\|e^{-itH }\delta_0\|^2=1,
 \end{equation}
 and the value of the convergent series $\sum_{j} j^2 e^{-\eta\varepsilon|j|}$ depends only on $\epsilon$ and $\eta$, but not on $t$. Hence, 
 \begin{equation}
 C\sum_{j\in\Z} j^2 e^{-\eta\varepsilon|j|}=\tilde{C}(\varepsilon,\eta)<\infty.
 \end{equation}
 Thus, we have
 \begin{equation}
  \left\|X e^{-itH }\delta_0\right\|\leq \frac{v|t|}{1-\varepsilon}+ \left(\tilde{C}(\varepsilon,\eta)\right)^{1/2}.
 \end{equation}
 Here we used the fact that $\sqrt{a+b}\leq \sqrt{a}+\sqrt{b}$ for any positive numbers $a$ and $b$.
 Divide by $t$, take the limit as $t\rightarrow \infty$ then take the limit as $\varepsilon\rightarrow 0$ to get the desired bound.
  \end{proof}

 \section{The determinant of a Jacobi matrix}\label{sec:Jacobi}
 
Consider the $p\times p$ Jacobi matrix
 \begin{equation}\label{def:Jac}
 J_p(a,b):=\begin{bmatrix}
 a_1 & b_1 &  & &  \\
 \bar b_1    & a_{2} & b_2 & &  \\
       & \ddots & \ddots & \ddots & \\
       &&\bar b_{p-2}&a_{p-1}& b_{p-1}\\
       &&&\bar b_{p-1}& a_p
 \end{bmatrix}_{p\times p}.
 \end{equation}
 where $\{a_j\}_j$ and $\{b_j\}_j$ are arbitrary sequences of complex numbers.

 \begin{lem}\label{lem:determinant}
Let $\Lambda_p:=\{1,2,\ldots, p\}$, and for $j=1,2,\ldots,p-1$ let $\Omega_j\subset\Lambda_p$ be the sets of consecutive pairs, $\Omega_j :=\{j,j+1\}$. The determinant of $J_p(a,b)$ in (\ref{def:Jac}) is given by the following formula
 \begin{equation}\label{eq:determinant}
 \det(J_p(a,b))=\prod_{n=1}^p a_n +\sum_{k=1}^{\lfloor\frac p2\rfloor} 
 \sum_{\tiny\begin{array}{c}
 j_1,j_2\ldots,j_k \in\Lambda_{p-1};\\
\Omega_{j_1}, \ldots,\Omega_{j_k} \text{ are}\\
\text{ mutually disjoint}
 \end{array}} 
\prod_{\ell=1}^k (-|b_{j_\ell}|^2)
\prod_{\tiny\begin{array}{c}
m=1\\
\displaystyle m\notin \bigcup_{\ell=1,\ldots,k}\Omega_{j_\ell}
\end{array}
}^p a_m.
 \end{equation}
 \emph{Remark:} In the second term, when $k=1$, there will be only one $\Omega_j$ (where $j=1,\ldots,p-1$) for which the term ``mutually disjoint'' should be ignored.
 \end{lem}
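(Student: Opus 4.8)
The plan is to expand $\det(J_p(a,b))$ directly via the Leibniz permutation formula and to identify exactly which permutations give a nonzero contribution. Writing $J:=J_p(a,b)$ with entries $J_{i,i}=a_i$, $J_{i,i+1}=b_i$, $J_{i+1,i}=\bar b_i$ and all other entries zero, we have
\begin{equation}
\det(J)=\sum_{\sigma\in S_p}\mathrm{sgn}(\sigma)\prod_{i=1}^p J_{i,\sigma(i)}.
\end{equation}
Since $J_{i,j}=0$ whenever $|i-j|>1$, a permutation $\sigma$ contributes only if $|\sigma(i)-i|\le 1$ for every $i\in\Lambda_p$. So the first step is to classify all such ``band'' permutations.

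The key combinatorial claim is that $\sigma\in S_p$ satisfies $|\sigma(i)-i|\le 1$ for all $i$ if and only if $\sigma$ is a product of disjoint adjacent transpositions: there is a (possibly empty) set $\{j_1<\cdots<j_k\}\subset\Lambda_{p-1}$ with the pairs $\Omega_{j_\ell}=\{j_\ell,j_\ell+1\}$ mutually disjoint and $\sigma=\prod_{\ell=1}^k(j_\ell,\,j_\ell+1)$. I would prove this by induction on $p$: if $\sigma\ne\mathrm{id}$, let $i_0$ be the smallest index with $\sigma(i_0)\ne i_0$; since $i_0-1$ (if present) is fixed, injectivity forces $\sigma(i_0)=i_0+1$, and since $i_0-1,i_0$ are then excluded as preimages of $i_0$, injectivity forces $\sigma(i_0+1)=i_0$. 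Thus $\sigma$ swaps $i_0,i_0+1$ and restricts to a band permutation of $\Lambda_p\setminus\{i_0,i_0+1\}$, to which the inductive hypothesis applies. Since a set of disjoint adjacent pairs in $\Lambda_p$ has at most $\lfloor p/2\rfloor$ elements, this automatically yields $0\le k\le\lfloor p/2\rfloor$, matching the range in \eqref{eq:determinant}.

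It then remains to evaluate each contributing term, which is pure bookkeeping. For $\sigma$ associated to the pairs $\Omega_{j_1},\dots,\Omega_{j_k}$: it is a product of $k$ transpositions, so $\mathrm{sgn}(\sigma)=(-1)^k$; the fixed points $m\notin\bigcup_\ell\Omega_{j_\ell}$ contribute $\prod_m a_m$; and each swapped pair contributes $J_{j_\ell,j_\ell+1}J_{j_\ell+1,j_\ell}=b_{j_\ell}\bar b_{j_\ell}=|b_{j_\ell}|^2$. Hence $\mathrm{sgn}(\sigma)\prod_i J_{i,\sigma(i)}=\prod_{\ell=1}^k\big(-|b_{j_\ell}|^2\big)\prod_{m\notin\bigcup_\ell\Omega_{j_\ell}}a_m$. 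Summing over all band permutations: the identity ($k=0$) gives $\prod_{n=1}^p a_n$, and the remaining terms, grouped by $k\ge 1$ and by the choice of disjoint pairs $\{j_1<\cdots<j_k\}$, give exactly the double sum in \eqref{eq:determinant}. This completes the proof.

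I expect the only genuine obstacle to be stating and proving the band-permutation classification cleanly; the sign and weight computation is routine. As an alternative one could instead induct on $p$ using the continuant recurrence $\det(J_p)=a_p\det(J_{p-1})-|b_{p-1}|^2\det(J_{p-2})$, obtained by Laplace expansion along the last row, and match it against the right-hand side of \eqref{eq:determinant} split according to whether the pair $\Omega_{p-1}$ is used; this avoids permutations at the cost of a slightly fussier index manipulation.
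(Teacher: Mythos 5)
Your proof is correct, but it takes a genuinely different route from the paper. The paper proves the formula by induction on $p$: it expands along the last row to obtain the continuant recurrence $\det(J_{p+1})=a_{p+1}\det(J_p)-|b_p|^2\det(J_{p-1})$ and then verifies, via a combinatorial split according to whether the pair $\Omega_p=\{p,p+1\}$ appears, that the right-hand side of \eqref{eq:determinant} satisfies the same recurrence. Your argument instead uses the Leibniz permutation expansion of the determinant, observes that the tridiagonal band structure kills every permutation with $|\sigma(i)-i|>1$ for some $i$, and classifies the surviving ``band'' permutations as products of disjoint adjacent transpositions; the sign and weight bookkeeping then produces \eqref{eq:determinant} directly. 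Your band-permutation classification argument (smallest non-fixed point $i_0$ must map to $i_0+1$, injectivity then forces $\sigma(i_0+1)=i_0$, reduce to the complement) is sound, and the claim that $k\le\lfloor p/2\rfloor$ follows automatically from disjointness. Compared to the paper, your approach derives the formula rather than verifying it, and makes transparent where each monomial comes from; the paper's inductive route avoids any discussion of permutations and stays within elementary row expansions, at the cost of a somewhat fussier matching of index ranges. You in fact flag the paper's approach as your ``alternative'' in the final paragraph, so you are aware of both.
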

While the explicit formula for $\det(J_p(a,b))$ in (\ref{eq:determinant}) looks somehow complicated, it is in fact  direct to apply when understood graphically. One needs to keep in mind a set of numbers (or vertices) labeled as $1,2,\ldots, p$, this is $\Lambda_p$. The first sum in (\ref{eq:determinant}) runs over $k$ and refers to the number of pairwise disjoint sets (of consecutive pairs) $\Omega_j=\{j,j+1\}\subset\Lambda_{p}$.  The second sum is to add over all possible ways to arrange the $k$ sets $\Omega_j$'s in $\Lambda_p$. The product of the $a$'s and $b$'s in each term is understood from each configuration of the $\Omega_j$'s. In particular, every $\Omega_j$ corresponds to the number $-|b_j|^2$ (with the same index $j$) in the product with all $a_m$'s where $m$ does not belong to any of $\Omega_j$'s, see Example \ref{ex:p-3-6} below. Thus, the sum over $k$ in $\det(J_p(a,b))$ is fully understood from the graphical illustration:
\begin{equation}
\begin{array}{cccccc}
\smash{\underbrace{\begin{matrix}
1 & 2&\ldots&p\end{matrix}}_{\tiny
\begin{matrix} \text{$k$  sets  $\Omega_j$'s} \\ j\in\Lambda_{p-1},\ k=1,\ldots, \lfloor \frac{p}{2}\rfloor\end{matrix}}}   \\
 \end{array}\\[1.5cm]
 \end{equation}

We remark here that (\ref{eq:determinant}) can be written as a one sum over $k$ where the trivial case $k=0$ corresponds to the product of all $a_j$'s.
\begin{equation}\label{eq:determinant:k0}
\det(J_p(a,b))=\sum_{k=0}^{\lfloor\frac p2\rfloor} 
 \sum_{\tiny\begin{array}{c}
 j_1,j_2\ldots,j_k \in\Lambda_{p-1};\\
\Omega_{j_1}, \ldots,\Omega_{j_k} \text{ are}\\
\text{ mutually disjoint}
 \end{array}} 
\prod_{\ell=1}^k (-|b_{j_\ell}|^2)
\prod_{\tiny\begin{array}{c}
m=1\\
\displaystyle m\notin \bigcup_{\ell=1,\ldots,k}\Omega_{j_\ell}
\end{array}
}^p a_m.
\end{equation}

\begin{ex}\label{ex:p-3-6}
We find $\det(J_p(a,b))$ for the two cases $p=3$ and $p=6$ using (\ref{eq:determinant}) or (\ref{eq:determinant:k0}).
 \begin{itemize}
 \item[(a)] \label{ex:p=3} When  $p=3$: The sum over $k$ in (\ref{eq:determinant}) produces just one value $k=1$, and here we have to consider $\Omega_1=\{1,2\}$ and $\Omega_2=\{2,3\}$ in the second sum, this gives directly that
\begin{equation}
\det(J_3(a,b))=a_1a_2a_3-|b_1|^2 a_3-|b_2|^2 a_1.
\end{equation}
\item[(b)] When $p=6$: The sum over $k$ is understood from Table \ref{table}.
$det(J_6(a,b))$ is the sum of the terms in the ``Products of $a$'s and $b$'s'' column.
\end{itemize}
\begin{center}
\begin{table}[h]
\begin{tabular}{ |c | c| c| c|}
\hline
$k$ & $\Omega$'s & Graph & Products of $a$'s and $b$'s\\
\hline
$k=0$ & No $\Omega_j$'s &{\tiny$\begin{array}{cccccc}1 & 2 &3 & 4&5 &6\\ \end{array}$}& $a_1 a_2 a_3 a_4 a_5 a_6$\\[0.3cm]
\hline
$k=1$ & $\Omega_1$ & {\tiny$\begin{array}{|cc|cccc}\cline{1-2}\textbf{1} & 2 &3 & 4&5 &6\\ \cline{1-2}\end{array}$}& $-|b_1|^2 a_3 a_4 a_5 a_6$\\ 
           & $\Omega_2$ & {\tiny$\begin{array}{c|cc|ccc}\cline{2-3}1 & \textbf{2} &3 & 4&5 &6\\ \cline{2-3}\end{array}$} &$-|b_2|^2 a_1 a_4 a_5 a_6$\\
           & $\Omega_3$ & {\tiny$\begin{array}{cc|cc|cc}\cline{3-4}1 & 2 &\textbf{3} & 4&5 &6\\ \cline{3-4}\end{array}$} &$-|b_3|^2 a_1 a_2 a_5 a_6$\\
           & $\Omega_4$ & {\tiny$\begin{array}{ccc|cc|c}\cline{4-5}1 & 2 &3 & \textbf{4}&5 &6\\ \cline{4-5}\end{array}$} &$-|b_4|^2 a_1 a_2 a_3 a_6$\\
           & $\Omega_5$ & {\tiny$\begin{array}{cccc|cc|}\cline{5-6}1 & 2 &3 & 4&\textbf{5} &6\\ \cline{5-6}\end{array}$}  &$-|b_5|^2 a_1 a_2 a_3 a_4$\\[0.3cm]
\hline      
$k=2$    & $\Omega_1, \Omega_3$ & {\tiny$\begin{array}{|cc| |cc|cc}\cline{1-2} \cline{3-4}\textbf{1} & 2 &\textbf{3} & 4&5 &6\\ \cline{1-2} \cline{3-4}\end{array}$}& $|b_1|^2 |b_3|^2 a_5 a_6$\\ 
    & $\Omega_1, \Omega_4$ &  {\tiny$\begin{array}{|cc| c|cc|c}\cline{1-2} \cline{4-5}\textbf{1} & 2 &3 & \textbf{4}&5 &6\\ \cline{1-2} \cline{4-5}\end{array}$} &$|b_1|^2 |b_4|^2 a_3 a_6$\\ 
    & $\Omega_1, \Omega_5$ &   {\tiny$\begin{array}{|cc| cc|cc|}\cline{1-2} \cline{5-6}\textbf{1} & 2 &3 & 4&\textbf{5} &6\\ \cline{1-2} \cline{5-6}\end{array}$}  &$|b_1|^2 |b_5|^2 a_3 a_4$\\ 
    & $\Omega_2, \Omega_4$ &   {\tiny$\begin{array}{c|cc||cc|c}\cline{2-3} \cline{4-5}1 & \textbf{2} &3 & \textbf{4}&5 &6\\ \cline{2-3} \cline{4-5}\end{array}$}  &$|b_2|^2 |b_4|^2 a_1 a_6$\\ 
     & $\Omega_2, \Omega_5$ &   {\tiny$\begin{array}{c|cc|c|cc|}\cline{2-3} \cline{5-6}1 & \textbf{2} &3 & 4&\textbf{5} &6\\ \cline{2-3} \cline{5-6}\end{array}$}  &$|b_2|^2 |b_3|^2 a_1 a_4$\\ 
     & $\Omega_3, \Omega_4$ &  {\tiny$\begin{array}{cc|cc||cc|}\cline{3-4} \cline{5-6}1 & 2 &\textbf{3} & 4&\textbf{5} &6\\ \cline{3-4} \cline{5-6}\end{array}$}  &$|b_3|^2 |b_5|^2 a_1 a_2$\\[0.3cm]    
    \hline
$k=3$ & $\Omega_1, \Omega_3,\Omega_5$&  {\tiny$\begin{array}{|cc||cc||cc|}\cline{1-2}\cline{3-4} \cline{5-6}\textbf{1} & 2 &\textbf{3} & 4&\textbf{5} &6\\ \cline{1-2}\cline{3-4} \cline{5-6}\end{array}$} & $-|b_1|^2 |b_3|^2  |b_5|^2$\\[0.3cm]
\hline
\end{tabular}
\caption{The determinant of a $6\times 6$ Jacobi matrix.}
\label{table}
\end{table}
\end{center}

\end{ex}

\begin{proof}[Proof of Lemma \ref{lem:determinant}]
 We proof formula (\ref{eq:determinant}) by induction on $p$.
 
 The base step $p=2$, results exactly one $\Omega=\{1,2\}$, and hence formula (\ref{eq:determinant}) gives the formula
 \begin{equation}
 \det(J_2(a,b))=a_1 a_2-|b_1|^2
 \end{equation}
as it should be.

Next, we assume that (\ref{eq:determinant}) is correct for all $m\leq p$ and we need to show that is valid for $m=p+1$, i.e., we want to prove that
\begin{equation}\label{eq:determinant-p+1}
 \det(J_{p+1}(a,b))=\prod_{n=1}^{p+1} a_n +\sum_{k=1}^{\lfloor\frac{p+1}2\rfloor}  
 \sum_{\tiny\begin{array}{c}
 j_1,j_2\ldots,j_k \in\Lambda_{p};\\
\Omega_{j_1}, \ldots,\Omega_{j_k} \text{ are}\\
\text{ mutually disjoint}
 \end{array}} 
\prod_{\ell=1}^k (-|b_{j_\ell}|^2)
\prod_{\tiny\begin{array}{c}
m=1\\
\displaystyle m\notin \bigcup_{\ell}\Omega_{j_\ell}
\end{array}
}^{p+1} a_m.
 \end{equation}

By expanding $ \det(J_{p+1}(a,b))$ along the last row, it is direct to see that
\begin{equation}
 \det(J_{p+1}(a,b))=a_{p+1}\det(J_p(a,b))-|b_p|^2\det(J_{p-1}(a,b)).
\end{equation}
Use (\ref{eq:determinant}) for $\det(J_p(a,b))$ and $\det(J_{p-1}(a,b))$ to obtain
\begin{eqnarray}\label{pf:determinant:p+1}
 \det(J_{p+1}(a,b))&=& \prod_{n=1}^{p+1} a_n +\sum_{k=1}^{\lfloor\frac{p}2\rfloor}
 \sum_{\tiny\begin{array}{c}
 j_1,j_2\ldots,j_k \in\Lambda_{p-1};\\
\Omega_{j_1}, \ldots,\Omega_{j_k} \text{ are}\\
\text{ mutually disjoint}
 \end{array}} 
\prod_{\ell=1}^k (-|b_{j_\ell}|^2)
\prod_{\tiny\begin{array}{c}
m=1\\
\displaystyle m\notin \bigcup_{\ell}\Omega_{j_\ell}
\end{array}
}^{p+1} a_m+\notag\\
&& -|b_p|^2\prod_{n=1}^{p-1} a_n -|b_p|^2\sum_{k=1}^{\lfloor\frac{p-1}2\rfloor} 
 \sum_{\tiny\begin{array}{c}
 j_1,j_2\ldots,j_k \in\Lambda_{p-2};\\
\Omega_{j_1}, \ldots,\Omega_{j_k}  \text{ are}\\
\text{ mutually disjoint}
 \end{array}} 
\prod_{\ell=1}^k (-|b_{j_\ell}|^2)
\prod_{\tiny\begin{array}{c}
m=1\\
\displaystyle m\notin \bigcup_{\ell}\Omega_{j_\ell}
\end{array}
}^{p-1} a_m.
\end{eqnarray}
The following argument explains why (\ref{pf:determinant:p+1}) is 
 equal to the desired formula (\ref{eq:determinant-p+1}):  The sum over $k$ in (\ref{eq:determinant-p+1}) can be understood from the graphical illustration:
\begin{equation}
\begin{array}{cccccc}
\smash{\underbrace{\begin{matrix}
1 & 2&\ldots&p &(p+1)\end{matrix}}_{\tiny
\begin{matrix} \text{$k$  sets  $\Omega_j$'s} \\ j\in\Lambda_p,\ k=1,\ldots, \lfloor \frac{p+1}{2}\rfloor\end{matrix}}}   \\
 \end{array}\\[1.5cm]
 \end{equation}
Let's recall that is understood as follows: The sum is taken over $k$ pairwise disjoint sets $\Omega_j=\{j,j+1\}$ where $j\in\Lambda_p$ and $k$ takes the values $1,\ldots, \lfloor \frac{p+1}{2}\rfloor$. Observe that this sum can be decomposed into the following three terms.
 \begin{equation}
\begin{array}{cccccc}
\smash{\underbrace{\begin{matrix}
1 & 2&\ldots&p\end{matrix}}_{\tiny
\begin{matrix} \text{$k$  sets  $\Omega_j$'s}\\ j\in\Lambda_{p-1}\\
k=1,\ldots, \lfloor \frac{p}{2}\rfloor\end{matrix}}}  
&(p+1)\\
 \end{array}+
\begin{array}{cccccc}
\smash{\underbrace{\begin{matrix}
1 & 2&\ldots&(p-1)\end{matrix}}_{\text{no $\Omega_j$'s}}}  &\smash{\underbrace{\begin{matrix} p & (p+1)\end{matrix}}_{\Omega_p}}\\
 \end{array} +
 \begin{array}{cccccc}
\smash{\underbrace{\begin{matrix}
1 & 2&\ldots&(p-1)\end{matrix}}_{\tiny\begin{matrix}\text{$k$ sets $\Omega_j$'s}\\
j\in\Lambda_{p-2}\\ 
k=1,\ldots, \lfloor\frac{p-1}{2}\rfloor\end{matrix}
}}  &\smash{\underbrace{\begin{matrix} p & (p+1)\end{matrix}}_{\Omega_p}}\\
 \end{array}\\[2cm]
\end{equation}
which match the second, third, and forth terms, respectively, in (\ref{pf:determinant:p+1}).
 \end{proof}




\begin{thebibliography}{99}

\bibitem{AR23} H. Abdul-Rahman, \emph{Dynamical evolution of entanglement in disordered oscillator systems}, Rev. Math. Phys. \textbf{35} (03), 2350003 (2023)

\bibitem{ARFS20} H. Abdul-Rahman, C. Fischbacher, and G. Stolz, \emph{Entanglement bounds in the XXZ quantum spin chain}, Ann. Henri Poincar\'e \textbf{21} (7), 2327-2366 (2020)

\bibitem{ARNSS17} H. Abdul-Rahman,  B. Nachtergaele, R. Sims, and G. Stolz, \emph{Localization properties of the disordered XY spin chain a review of mathematical results with an eye toward many-body localization}, 
Ann. der Phys. (Berlin) \textbf{529}, 1600280 (2017)

\bibitem{ARSS18} H. Abdul-Rahman, R. Sims, and G. Stolz, \emph{Correlations in disordered quantum harmonic oscillator systems: The effects of excitations and quantum quenches},
Contemp. Math. \textbf{717}, 31-47 (2018)

\bibitem{ARSS20} H. Abdul-Rahman, R. Sims, and G. Stolz, \emph{On the regime of localized excitations for disordered oscillator systems}, Lett. Math. Phys.  \textbf{110}, 1159-1189 (2020)

\bibitem{ARS23} H Abdul-Rahman, G. Stolz, \emph{Exponentially decaying velocity bounds of quantum walks in periodic fields}, Comm. Math. Phys. \textbf{403}, 1297-1327 (2023)

\bibitem{AG} M.~Aizenman, G.~M.~Graf, \emph{Localization bounds for an electron gas}, J.~Phys.~A {\bf 31}, 6783-6806 (1998)

\bibitem{AW12}  M. Aizenman, S. Warzel, \emph{Absolutely continuous spectrum implies ballistic transport for quantum particles in a random potential on tree graphs}, J. Math. Phys. \textbf{53}, 095205 (2012)


\bibitem{A-Sabri19} N. Anantharaman, M. Sabri, \emph{Quantum ergodicity on graphs: from spectral to spatial delocalization}, Ann. Math. \textbf{189} (3), 753-835 (2019)

\bibitem{LC-Sigal23}  J. Arbunich, J. Faupin, F. Pusateri, and I. M. Sigal, \emph{Maximal speed of quantum propagation for the Hartree equation}, Commun. Partial. Differ. Equ. \textbf{48}, 542-575 (2023)



\bibitem{Asch-Knauf-1998} J. Asch, A. Knauf, \emph{Motion in periodic potentials}, Nonlinearity \textbf{11}, 175-200 (1998)




\bibitem{BDMY23} A. Black, D. Damanik, T. Malinovitch, and G. Young, \emph{Directional ballistic transport for partially periodic Schr\"odinger operators}, (2023) arXiv:2311.08612.

\bibitem{Monval-Sabri-22} A. Boutet de Monvel, M. Sabri, \emph{Ballistic transport in periodic and random media},  In: Brown, M., et al. From Complex Analysis to Operator Theory: A Panorama. Operator Theory: Advances and Applications, vol 291. Birkhäuser, Cham. (2023)


\bibitem{LC-L23} S. Breteaux, J. Faupin, M. Lemm, D. H. Ou Yang, I. M. Sigal, and J. Zhang, \emph{Light cones for open quantum systems}, (2023)  arXiv:2303.08921

\bibitem{Bruijn-Complex} N. G. de Bruijn, \emph{Asymptotic Methods in Analysis}, North-Holland, Amsterdam (1958)



\bibitem{LC19} C.F. Chen, A. Lucas, \emph{Finite speed of quantum scrambling with long range interactions}, Phys. Rev. Lett. \textbf{123}, 250605 (2019)


\bibitem{DFO16} D. Damanik, J. Fillman, and D. C. Ong, \emph{Spreading estimates for quantum walks on the integer lattice via power-law bounds on transfer matrices}, J. Math. Pures Appl. \textbf{105}, 293-341  (2016)

\bibitem{DLY} D.~Damanik, M.~Lukic, and W.~Yessen, {\it Quantum dynamics of periodic and limit-periodic Jacobi and block Jacobi matrices with applications to some quantum many body problems}, Comm. Math. Phys. {\bf 337}, 1535-1561 (2015)


\bibitem{Damanic-Ballistic-24} D. Damanik, T. Malinovitch, and G Young, \emph{What is ballistic transport?, arXiv:2403.19618.}

\bibitem{Darras} M. Darras, \emph{Quantum velocities for one-dimensional Schrödinger operators with periodic potential}, PhD Thesis, University of Alabama at Birmingham (2021)

\bibitem{EKS18} A. Elgart, A. Klein, and G. Stolz, \emph{Many-body localization in the droplet spectrum of the random XXZ
quantum spin chain}, J. Funct. Anal. \textbf{275}, 211-258  (2018)

\bibitem{EKS18-2} A. Elgart, A. Klein, and G. Stolz, \emph{Manifestations of dynamical localization in the disordered XXZ spin chain}, Comm. Math. Phys. \textbf{361}, 1083-1113  (2018)


\bibitem{LR-FLS22} J. Faupin, M. Lemm, and I. M. Sigal, \emph{On Lieb-Robinson bounds for the Bose-Hubbard model}, Comm. Math. Phys. \textbf{394}, 1011-1037  (2022)

\bibitem{F20} J. Fillman, \emph{Ballistic transport for periodic Jacobi operators on $\Z^d$}, From Operator Theory to Orthogonal Polynomials, Combinatorics, and Number Theory, Oper. Theory Adv. Appl., vol. 285, Birkhäuser, Cham,
 57-68 (2021)

\bibitem{FF24} C. Fischbacher, L. Fisher, \emph{Entanglement entropy bounds for droplet states of the XXZ model on the strip}, J. Math. Phys., \emph{to appear}.

\bibitem{FO21} C. Fischbacher, O. Ogunkoya, \emph{Entanglement entropy bounds in the higher spin XXZ chain},  J. Math. Phys. {\bf 62}, 101901 (2021)

\bibitem{FS23} C. Fischbacher, R. Schulte, \emph{Lower bound to the entanglement entropy of the XXZ spin ring}, Ann. Henri Poincar\'e {\bf 24}, 3967–4012 (2023)

\bibitem{LC15} M. Foss-Feig, Z.-X. Gong, C. W. Clark, and A. V. Gorshkov, \emph{Nearly linear light cones in long-range interacting quantum systems}, Phys. Rev. Lett. \textbf{114}, 157201 (2015)


\bibitem{GMN} M.~Gebert, A.~Moon, and B.~Nachtergaele, \emph{A Lieb-Robinson bound for quantum spin chains with strong on-site impurities}, Rev.~Math.~Phys. {\bf 34}, 2250007 (2022)

\bibitem{LR-GGC23} Z. Gong, T. Guaita, and J. I. Cirac, \emph{Long-range free fermions: Lieb-Robinson bound, clustering properties, and topological phases}, Phys. Rev. Lett. \textbf{130}, 070401 (2023)

 
 
 \bibitem{Imbri16} J. Z. Imbrie, \emph{On many-body localization for quantum spin chains}, J. Stat. Phys. \textbf{163}, 998-1048 (2016)
 
  
 
 
 \bibitem{KS-Eig14}  E. Korotyaev, N. Saburova, \emph{Schr\"odinger operators on periodic discrete graphs}, J. Math. Anal. Appl. \textbf{420}, 576-611 (2014)
 
 \bibitem{LC20} T. Kuwahara, K. Saito, \emph{Strictly linear light cones in long-range interacting systems of arbitrary dimensions}, Phys. Rev. X \textbf{10}, 031010 (2020)
 
\bibitem{L}  Y.~Last, \emph{On the measure of gaps and spectra for discrete  1D Schr\"odinger operators}, 
Comm. Math. Phys. {\bf 149}, 347–360  (1992)

\bibitem{LR72} E. H. Lieb, D. W. Robinson, \emph{The finite group velocity of quantum spin systems}, Comm. Math. Phys. \textbf{28}, 251 (1972).


\bibitem{Magnus-Matrix-Calc} J. R. Magnus, H. Neudecker, \emph{Matrix Differential Calculus with Applications in Statistics and Econometrics}, Wiley  (1999)

\bibitem{MS18} R. Mavi, J. Schenker, \emph{Localization in the disordered Holstein model}, Comm. Math. Phys. \textbf{364}, 719-764  (2018).



\bibitem{MPS20} P. M\"uller, L. Pastur, and R. Schulte, \emph{How much delocalisation is needed for an enhanced area law of the entanglement entropy?} Comm. Math. Phys. \textbf{376} (1), 649-679 (2020)

\bibitem{NRSS09} B. Nachtergaele, H. Raz, B. Schlein, and R. Sims, \emph{Lieb-Robinson bounds for harmonic and anharmonic lattice systems}, Comm. Math. Phys. \textbf{286}, 1073-1098 (2009)

\bibitem{NS10} B.~Nachtergaele, R.~Sims, \emph{Lieb-Robinson bounds in quantum many-body physics}, Contemp.~Math. {\bf 529}, 141-176, American Mathematical Society, Providence, RI (2010)

\bibitem{NS06} B. Nachtergaele, R. Sims, \emph{Lieb-Robinson Bounds
and the Exponential Clustering Theorem}, Comm. Math. Phys. \textbf{265}, 119 (2006)

\bibitem{NSS12} B. Nachtergaele, R. Sims, and G. Stolz, \emph{Quantum harmonic oscillator systems with disorder}, J. Stat. Phys. \textbf{149}, 969-1012  (2012)

\bibitem{LR-NSY18} B. Nachtergaele, R. Sims, and A. Young, \emph{Lieb-Robinson bounds, the spectral flow, and stability of the spectral gap for lattice fermion systems}, Mathematical Problems in Quantum Physics \textbf{717} (2018)







\bibitem{RS} C.~Radin, B.~Simon, \emph{Invariant domains for the time-dependent Schrödinger equation}, J. Differ. Equ. {\bf 29}, 289-296  (1978)

\bibitem{Reed-Simon-IV} M. Reed, B. Simon, \emph{Methods of modern mathematical physics: IV Analysis of operators}, Academic Press, Cambridge (1972)

\bibitem{Rouche} T. Estermann, \emph{Complex Numbers and Functions}. Athlone Press, Univ. of London  (1962)

\bibitem{Seiringer-Warzel16} R. Seiringer, S. Warzel, \emph{Decay of correlations and absence of superfluidity in the disordered Tonks-Girardeau gas}, New J. Phys. \textbf{18}, 035002 (2016)




\bibitem{LC21} M. C. Tran, A. Y. Guo, C. L. Baldwin, A. Ehrenberg, A. V. Gorshkov, and A. Lucas, \emph{Lieb-Robinson light cone for power-law interactions}, Phys. Rev. Lett. \textbf{127}, 160401 (2021) 







\bibitem{LR-YL22} C. Yin, A. Lucas, \emph{Finite speed of quantum information in models of interacting bosons at finite density}, Phys. Rev. X \textbf{12}, 021039  (2022).


\end{thebibliography}
\end{document}